\documentclass[final]{dmtcs-episciences}

\usepackage{amsmath}
\usepackage[utf8]{inputenc}
\usepackage[T1]{fontenc}
\usepackage{amsfonts}
\usepackage{amssymb}
\usepackage{float}
\usepackage{url}
 \usepackage[all]{xy}
 \usepackage{textgreek}
 \usepackage{dsfont}
 \usepackage{stmaryrd}
\usepackage{pxfonts}
\usepackage{bbm}
\usepackage{tikz}
\usepackage[fancy]{tikz-inet}
\usetikzlibrary{automata}
\usetikzlibrary{arrows}
\usetikzlibrary{shapes}
\usetikzlibrary{decorations.pathmorphing}
\usetikzlibrary{fit}
\usepackage{ifthen}
\usepackage{stmaryrd} 

\tikzstyle{every picture}=[
  >=stealth', 
shorten >=1pt, 
node distance=1.44cm,
auto,
bend angle=45,
initial text=,
  every state/.style={inner sep=0.75mm, minimum size=1mm},
font=\small,
]

\def\sc{\mathrm{sc}}

\bibliographystyle{plain}
\newcommand{\IntEnt}[1]{\llbracket #1\rrbracket}

\newtheorem{definition}{Definition}
\newtheorem{lemma}{Lemma}
\newtheorem{example}{Example}
\newtheorem{proposition}{Proposition}
\newtheorem{theorem}{Theorem}
\newtheorem{claim}{Claim}
\newtheorem{rmk}{Remark}
\newtheorem{corollary}{Corollary}

\def\N{{\mathbb N}}
\begin{document}

\title{New tools for state complexity}
\author{Pascal Caron\affiliationmark{1}
    \and Edwin Hamel-De Le Court\affiliationmark{1}
    \and Jean-Gabriel Luque\affiliationmark{1}\\
    \and Bruno Patrou\affiliationmark{1}}
\affiliation{LITIS, Universit\'e de Rouen-Normandie,  France}
\maketitle
\received{2018-9-18}
\revised{2019-9-25,2020-1-7,2020-1-22}
\accepted{2020-2-12}
\publicationdetails{22}{2020}{1}{9}{4835}

\begin{abstract}
A monster is an automaton in which every function from states to states is represented by at least one letter. A modifier is a set of functions allowing one to transform a set of automata into one automaton.
We revisit some language transformation algorithms in terms of modifier and monster. These new theoretical concepts allow one to find easily some state complexities. We illustrate this by retrieving the state complexity of the Star of Intersection and the one of the Square root operation. 
\end{abstract}

\section{Introduction}
The studies around state complexities last for more than twenty years now. Mainly initiated by Yu et al (\cite{YZS94}) and very active ever since, this research area dates back in fact to the beginning of the 1970s. In particular, in \cite{Mas70} Maslov gives values (without proofs) for the state complexity of some operations: square root, cyclic shift and proportional removal. From these foundations, tens and tens of papers have been produced and different sub-domains have appeared depending on whether the used automata are deterministic or not, whether the languages are finite or infinite, belong to some classes (codes, star-free, $\ldots$) and so on. We focus here on the deterministic case for any language.

The state complexity of a regular language is the size of its minimal automaton and the state complexity of a regular operation is the maximal one of those languages obtained by applying this operation onto languages of fixed state complexities. So, to compute a state complexity, most of the time the approach is to calculate an upper bound from the characteristics of the considered operation and to provide a witness, that is a specific example reaching the bound which is then the desired state complexity.

This work has been done for numerous unary and binary operations. See, for example, \cite{Dom02}, \cite{JJS05}, \cite{Jir05}, \cite{JO08}, \cite{Yu01a} and \cite{GMRY17} for a survey of the subject. More recently, the state complexity of combinations of operations has also been studied. In most of the cases the result is not simply the mathematical composition of the individual complexities and studies lead to interesting situations. Examples can be found in \cite{SSY07}, \cite{CGKY11}, \cite{GSY08} or \cite{JO11}.

Beyond the search of state complexities and witnesses, some studies try to improve the given witnesses, especially the size of their alphabet (\cite{CLP18}, \cite{CLP17}). Others try to unify the techniques and the approaches used to solve the different encountered problems. In \cite{Brz13}, Brzozowski proposes to use some fundamental configurations to produce witnesses in many situations. In \cite{CLMP15}, the authors show how to compute the state complexities of $16$ combinations by only studying three of them.

In this paper, we propose a general method to build witnesses, consisting in maximizing the transition function of automata. Among the resulting automata, called monsters, at least one of them is a witness. We just have to discuss the finality of the states to determine which ones are. We illustrate this technique by recomputing the state complexity of the operation obtained in combining star with intersection. The state complexity of the square root operation is also computed and improved (compared to the bound given by Maslov \cite{Mas70}) as another illustration.

The paper is organized as follows. Section \ref{sect-prel}  gives definitions and notations about automata and combinatorics. In Section  \ref{sect-modifier}, we define  \emph{modifiers} and  give  some properties of these algebraic structures. In Section \ref{sect-monsters}, monsters automata are defined and  their use in automata computation is shown. Section \ref{sect-appli} is devoted to   show how these new tools can be used to compute tight bounds for state complexity. Star of intersection and square root  examples are described.
\section{Preliminaries}\label{sect-prel}
Let $\Sigma$ 
denote a finite alphabet. A word $w$ over  $\Sigma$ is a finite sequence of symbols of $\Sigma$. The length of  $w$, denoted by $|w|$, is the number of occurrences of symbols of $\Sigma$ in $w$. For  $a\in \Sigma$, we denote by $|w|_a$ the number of occurrences of  $a$ in $w$.  The set of all finite words over $\Sigma$ is denoted by $\Sigma ^*$.  The empty word is denoted by  $\varepsilon$. A language is a subset of $\Sigma^*$. The cardinality of a finite set $E$ is denoted by $\#E$,  the set of subsets of  $E$ is denoted by $2^E$ and the set of mappings of $E$ into itself is denoted by $E^E$.


A  finite automaton (FA) is a $5$-tuple $A=(\Sigma,Q,I,F,\delta)$ where $\Sigma$ is the input alphabet, $Q$ is a finite set of states, $I\subset Q$ is the set of initial states, $F\subset Q$ is the set of final states and $\delta$ is the transition function from  $Q\times \Sigma$ to $2^Q$ extended in a natural way from $2^Q\times \Sigma^*$ to $2^Q$. 

 A word $w\in \Sigma ^*$ is recognized by an FA $A$ if $\delta(I,w)\cap F\neq \emptyset$. 
The language recognized by an FA $A$ is the set $L(A)$ of words recognized by $A$. 
Two automata are said to be equivalent if they recognize the same language.  
A state $q$ is accessible in an FA  if there exists a word $w\in \Sigma ^*$ such that $q\in \delta(I,w)$.

An FA is complete and deterministic (CDFA) if $\#I=1$ and for all $q\in Q$, for all $a\in \Sigma$, $\#\delta(q,a)= 1$. 
Let $D=(\Sigma,Q_D,i_D,F_D,\delta)$ be a CDFA. When there is no ambiguity, we identify $\#D$ to $\#Q_D$. For any word $w$, we denote by $\delta^w$ the function $q\rightarrow\delta(q,w)$.
Two states $q_1,q_2$ of  $D$ are equivalent if for any word $w$ of $\Sigma^*$, $\delta(q_1, w)\in F_D$ if and only if $\delta(q_2, w)\in F_D$. Such an equivalence is denoted by $q_1\sim q_2$. A CDFA is  minimal if there does not exist any equivalent   CDFA  with less states and it is well known that for any DFA, there exists a unique minimal equivalent one \cite{HU79}. Such a minimal CDFA  can be  obtained from $D$ by computing the accessible part of the automaton $D/\sim=(\Sigma,Q_D/\sim,[i_D],F_D/\sim,\delta_{\sim})$ where for any $q\in Q_D$, $[q]$ is the $\sim$-class of the state $q$ and satisfies the property  $\delta_{\sim}([q],a)=[\delta(q,a)]$, for any $a\in \Sigma$. The number of its states is denoted by $\#_{Min}(D)$.
In a minimal CDFA, any two distinct states are pairwise inequivalent. 

For any integer $n$, let us denote $\llbracket n\rrbracket$ for $\{0,\ldots, n-1\}$. When there is no ambiguity, for any character $\mathtt{X}$ and any integer $k$ given by the context, we  write $\underline{\mathtt{X}}$ for  $(\mathtt{X}_1,\cdots, \mathtt{X}_k)$.
 The state complexity of a regular language $L$ denoted by $\sc(L)$ is the number of states of its minimal CDFA. 
  Let ${\cal L}_n$ be the set of languages of state complexity $n$. The state complexity of a unary operation $\otimes$ is the function $\sc_{\otimes}$ associating with an integer $n$, the maximum of the state complexities of $\otimes L$ for $L\in {\cal L}_n$.
  A language $L\in {\cal L}_n$ is a witness (for $\otimes$) if  $\sc(\otimes(L))=\sc_{\otimes}(n)$.
  This can be generalized, and the state complexity of a $k$-ary operation $\otimes$ is the $k$-ary function which associates with any $k$-tuple of integers $\underline{n}$, the integer $\mathrm{max}\{\sc(\otimes\underline{L})\mid \underline{L}\in\mathcal{L}_{n_1}\times \dots \times \mathcal{L}_{n_k}\}$. Then, a witness is a tuple $\underline{L}\in({\cal L}_{n_1}\times \cdots  \times{\cal L}_{n_k})$ such that $\sc\underline{L}=\sc_{\otimes}\underline{n}$. 

We  also need some background from finite transformation semigroup theory \cite{GM08}. 
Let $n$ be an integer. A transformation $t$ is an element of $\IntEnt{n}^{\IntEnt{n}}$.
We denote by $it$ the image of $i$ under $t$. A transformation of $\IntEnt{n}$ can be represented by $t=[i_0, i_1, \ldots i_{n-1}]$ which means that $i_k=kt$ for each $k\in \IntEnt{n}$ and $i_k\in \IntEnt{n}$. A \textit{permutation} is a bijective transformation on $\IntEnt{n}$. The \textit{identity} permutation is denoted by $\mathds{1}$. A \textit{cycle} of length $\ell\leq n$  is a permutation $c$, denoted   by $(i_0,i_1,\ldots, i_{\ell-1})$, on a subset $I=\{i_0,\ldots ,i_{\ell-1}\}$ of $\IntEnt{n}$  where  $i_kc=i_{k+1}$ for $0\leq k<\ell-1$ and $i_{\ell-1}c=i_0$.  
A \textit{transposition} $t=(i,j)$ is a permutation on $\IntEnt{n}$ where $it=j$ and $jt=i$ and for all  elements $k\in \IntEnt{n}\setminus \{i,j\}$, $kt=k$.  A \textit{contraction}  $t=\left(i\atop j\right)$ 
is a transformation where  $it=j$ and  for all  elements $k\in \IntEnt{n}\setminus \{i\}$, $kt=k$.

Let $L$ and $L'$ be two regular languages defined over an alphabet $\Sigma$. Let $\mathrm{Union}(L,L')=\{w\mid w\in L \vee w\in L'\}$,  $\mathrm{Inter}(L,L')=\{w\mid w\in L \wedge w\in L'\}$,  $\mathrm{Xor}(L,L')=\{w\mid (w\in L \wedge w\notin L') \vee (w\notin L \wedge w\in L')\}$,  $\mathrm{Prefin}(L)=\{w=uv\mid u\in L,\ v\in \Sigma^*\}$,  $\mathrm{Comp}(L)=\{w\mid w\not\in L\}$, $\mathrm{Conc}(L,L')=\{w=uv\mid u\in L,\ v\in L'\}$, $\mathrm{Star}(L)=\{w=u_1\cdots u_n\mid  u_i\in L\}$, $\mathrm{SRoot}(L)=\{w\in \Sigma^*\mid ww\in L\}$ .

\section{Modifier and associated transformations}\label{sect-modifier}

We first define a mechanism which unifies some automata transformations for  regular operations on languages. This mechanism is called a \emph{modifier}. A $k$-modifier is an algorithm taking $k$ automata as input and outputting an automaton. A lot of regular operations on languages can be described using this mechanism (mirror, complement, Kleene star, \ldots). These regular operations are called \emph{describable}.
Then, we give some properties for describable operations.
 We will first see that not all regular operations are describable and that there also exist modifiers which do not correspond to regular operations on languages.

\subsection{Definitions}

  The \emph{state configuration} of a DFA $A=(\Sigma,Q,i,F,\delta)$ is the triplet $(Q,i,F)$.
Our purpose is to consider operations on languages that can be encoded on DFA. To this aim, such an operation will be described as a $k$-ary operator $\mathfrak m$, acting on DFAs $A_1,\ldots A_k$ over the same alphabet $\Sigma$ and producing a new DFA such that
 \begin{itemize}
  \item the alphabet of $\mathfrak m (A_1,...,A_k)$ is $\Sigma$,
  \item the state configuration of $\mathfrak m (A_1,...,A_k)$ depends only on the state configurations of the DFAs $A_1,\ldots,A_k$,
  \item for any letter $a\in\Sigma$, the transition function of $a$ in $\mathfrak m (A_1,\ldots,A_k)$ depends only on the state configurations of the DFAs $A_1,\ldots, A_k$ and on the transition functions of $a$ in each of the DFAs $A_1,...,A_k$ (not on the letter itself nor on any other letter or transition function).
  \end{itemize}
More formally,
\begin{definition}
A $k$-modifier $\mathfrak m$ is a  $4$-tuple of mappings $(\mathfrak Q,\iota,\mathfrak f,\mathfrak d)$ acting on $k$ CDFA $\underline{A}$ with $A_j=(\Sigma, Q_j,i_j,F_j,\delta_ j)$ to build a CDFA $\mathfrak m\underline{A}=(\Sigma,Q,i,F,\delta)$, where 

\begin{center}
$Q=\mathfrak Q \underline{Q},\
i=\iota(\underline{Q},\underline{i},\underline{F}), 
F=\mathfrak f(\underline{Q},\underline{i},\underline{F})$ and \\
\smallskip
$\forall a\in \Sigma,\ \delta^a=\mathfrak d(\underline{i},\underline{F},\underline{\delta^a}).
$ 
\end{center}
\end{definition}
Notice that we do not need to put explicitly the dependency of $\mathfrak d$ on $\underline{Q}$ because the information is already present in $\underline{\delta^a}$.

 For $1$-modifiers, as $\underline{Q}=(Q_1)$, we  denote $\iota(Q_1,i_1,F_1)$ for $\iota (\underline{Q},\underline{i},\underline{F})$,  $\mathfrak f(Q_1,i_1,F_1)$ for $\mathfrak f(\underline{Q},\underline{i},\underline{F})$, and $\mathfrak d(i_1,F_1, \delta_1^a)$ for $\mathfrak d(\underline{i},\underline{F}, \underline{\delta^a})$.
\begin{example}\label{exAutMod}
	\rm Consider the modifier $\mathfrak{Prefin}$ of Table \ref{table operations}. If $A_{1}=(\Sigma,Q_{1},i_1,F_{1},\delta_1)$ is a complete deterministic automaton then ${\mathfrak{Prefin}}(A_{1})=(\Sigma,Q_{1},i_1,F_{1},\delta)$ where for any state $q\in Q_{1}$ for any $a\in \Sigma$ we have
	$\delta^a(q)=\delta_1^a(q)$ if $q\not \in F_1$ and $\delta^a(q)=q$ if $q\in F_1$ .\\
	For instance consider the automaton $A_{1}$ with the following graphical representation:
	\begin{center}
\begin{tikzpicture}[node distance=2cm]
\node [state,initial] (p0) {$0$};
\node[state,accepting] (p1)[right of=p0]{$1$};
\node[state,accepting] (p2)[below of=p1]{$2$};
\path[->]
(p0)edge node{$a$} (p1)
(p0)edge[loop ] node[swap] {$b$} (p0)
(p1)edge[loop] node[swap] {$a$} (p1)
(p1)edge node {$b$} (p2)
(p2)edge node {$a,b$} (p0)
;
\end{tikzpicture}
\end{center}
The automaton ${\mathfrak{Prefin}}(A_{1})$ is given by
	\begin{center}
\begin{tikzpicture}[node distance=2cm]
\node [state,initial] (p0) {$0$};
\node[state,accepting] (p1)[right of=p0]{$1$};
\node[state,accepting] (p2)[right of=p1]{$2$};
\path[->]
(p0)edge node{$a$} (p1)
(p0)edge[loop] node[swap] {$b$} (p0)
(p1)edge[loop ] node[swap] {$a,b$} (p1)
(p2)edge[loop] node[swap] {$a,b$} (p2)
;
\end{tikzpicture}
\end{center}
\end{example}
\begin{definition}
We consider an operation $\otimes$ acting on $k$-tuples of languages defined on the same alphabet. The operation $\otimes$ is said to be \emph{describable} (\emph{$\mathfrak{m}$-describable}) if there exists a $k$-modifier $\mathfrak{m}$ 
 such that  for any $k$-tuple of CDFA $\underline{A}$, we have $L({\mathfrak m}\underline{A})=\otimes(L(A_1),\ldots, L(A_k))$.  
\end{definition}

\begin{example}\rm
	The operation $\mathrm{Prefin}$ defined by $\mathrm{Prefin}(L)=L\Sigma^*$ for any $L\subset \Sigma^*$ is the $\mathfrak{Prefin}$-describable operation 
	where $\mathfrak{Prefin}$ is the modifier defined in Table \ref{table operations}. 
\end{example}
 For the modifiers $\mathfrak{Union}$, $\mathfrak{Inter}$ and $\mathfrak{Xor}$, a  state is an element of the  cartesian product of the states of the input. For the  $\mathfrak{Conc}$ modifier, a state is a pair composed of a state of the first input   and a subset of states of the second input. For the $\mathfrak{Star}$ modifier, a state is  a subset of states of the input.
For the $\mathfrak{SRoot}$ modifier, each state is a function from the set of states to the set of states of the input.

\begin{table}[H]\label{table-modif}
$$
\begin{array}{|c|c|c|c|c|c|}
\hline
&\mathfrak Q\underline{Q}& \iota(\underline Q,\underline i,\underline F)&\mathfrak f(\underline Q,\underline i, \underline F)& \mathfrak d(\underline{i} ,\underline F,\underline{\delta^a})\\
\hline
\mathfrak{Comp}&Q_1&i_1&Q_{1}\setminus F_{1}&\delta_1^a\\\hline
\mathfrak{Prefin}&Q_1&i_1&F_{1}&q\rightarrow \left\{\begin{array}{ll}\delta_1^a(q)&\text{ if }q\not\in F_{1}\\q&\text{ if }q\in F_{1} \end{array}\right.\\\hline
\mathfrak{Union}&Q_1\times Q_2&(i_1,i_2)&F_{1}\times Q_{2}\cup  Q_{1}\times F_{2}&{}
\underline{\delta^a}\\\hline
\mathfrak{Inter}&Q_1\times Q_2&(i_1,i_2)&F_{1}\times F_{2}&\underline{\delta^a}\\\hline
\mathfrak{Xor}&Q_1\times Q_2&(i_1,i_2)&\begin{array}{l}F_{1}\times(Q_{2}\setminus F_{2})\\
\cup (Q_{1}\setminus F_{1})\times F_{2}\end{array}&\underline{\delta^a}\\\hline
\vspace{-3mm} &&&&\\
\mathfrak{Conc}&Q_1\times2^{Q_2}&(i_1,\emptyset)& \{(q_1,E)\mid E\cap F_{2}\neq\emptyset\}&(q_{1},E)\rightarrow{}
\Xi^{F_{1}}_{i_1}(\delta_1^a(q_{1}),\delta _2^a (E))
\\
\vspace{-3mm} &&&&
\\\hline
\vspace{-3mm} &&&&\\
\mathfrak{Star}&2^{Q_1}&\emptyset&\begin{array}{l}\{E\mid E\cap F_1\neq\emptyset\}\cup\{\emptyset\}\end{array}&E\rightarrow\left\{\begin{array}{ll}\overline{\{\delta_1^a(i_1)\}}^{F_1,i_1}&\mbox{ if }E=\emptyset\\
\overline{\delta_1^a(E)}^{F_1,i_1}&\mbox{ otherwise }\end{array}\right.
\\
\vspace{-3mm} &&&&
\\\hline
\mathfrak{SRoot}&Q_1^{Q_1}&Id&\left\{g\mid g^2(i_1)\in F_{1}\right\}&g\rightarrow (\delta_1^a\circ g)
\\\hline
\multicolumn{5}{l}
{\text{where } 
\overline E^{F,x}=E\cup\{x\}\text{ if }E\cap F\neq\emptyset\text{ and }E\text{ otherwise, and }\Xi_{y}^{F}(x_,E)=(x,E\cup\{y\})\text{ if }x\in F\text{ and }(x,E)\text{ otherwise.}}
\end{array}
$$

\caption{Description of modifiers for some describable operations}\label{table operations}
\end{table}
\begin{example}[Mirror modifier]\label{Mirror-modifier}
Let us define the $1$-modifier  ${\mathfrak{Mirror}}=(\mathfrak Q,\iota, \mathfrak{f}, \mathfrak d)$ as :
\begin{itemize}
\item $\mathfrak Q(Q_1)= 2^{Q_1}$,
\item $\iota(Q_1,i_1, F_1)=F_1$,
\item $\mathfrak f(Q_1,i_1, F_1)= \{E\subset {Q_1}\mid i_1\in E\}$.
\item $\mathfrak d(i_1,F_1, \delta_1^a)$ is defined as  $E\rightarrow E' $ with   $E'=\displaystyle \bigcup _{q\in E}\{q'\mid \delta_1^a (q')= q\}$,
\end{itemize}
The mirror operation is  describable, indeed, for any DFA $A_1$, the mirror of $L(A_1)$ is  $L({\mathfrak{Mirror}}(A_1))$.
Applying  the ${\mathfrak{Mirror}}$ modifier to the  automaton $A$ of Figure~\ref{fig-aut} leads to the DFA of Figure~\ref{mirror-modifier}.
\begin{figure}[H]
\begin{center}
\begin{tikzpicture}[node distance=2cm]
			\node[state,initial] (p0) {$0$};
			\node[state,accepting] (p1) [right of=p0] {$1$};
			\node[state,accepting] (p2) [right of=p1] {$2$};
        \path[->](p0) edge node {$a$} (p1)
        (p0) edge[bend right] node {$b$} (p2)
        (p1) edge node {$a,b$} (p2)
        (p2) edge[loop] node[swap]{$a,b$} (p2);
    \end{tikzpicture}
\end{center}
\caption{The automaton $A$.}\label{fig-aut}
\end{figure}
\vspace{-2cm}
\begin{figure}[H]
\begin{center}
\begin{tikzpicture}[node distance=2cm]
  \node[state] (pv) {$\emptyset$};
  \node[state,accepting] (p0) [right of= pv]{$\{0\}$};
  \node[state] (p1) [right of= p0]{$\{1\}$};
  \node[state,accepting] (p01) [below of= p0]{$\{0,1\}$};
  \node[state,accepting] (p012) [right of=p1]{$\{0,1,2\}$};
  \node[state,initial above] (p12) [right of = p012] {$\{1,2\}$};
  \node[state] (p2) [right of = p12] {$\{2\}$};
  \node[state,accepting] (p02) [below of = p12] {$\{0,2\}$};
  \path[->]
  (pv) edge[loop left] node {$a,b$} (pv)
  (p0) edge node[swap] {$a,b$} (pv)
  (p1) edge node[swap] {$a$} (p0)
  (p1) edge[bend right] node [swap]{$b$} (pv)
  (p01) edge node {$a$} (p0)
  (p01) edge node {$b$} (pv)
  (p012) edge[loop below] node {$a,b$} (p012)
  (p12) edge node {$a,b$} (p012)
  (p2) edge[bend right] node[swap] {$b$} (p012)
  (p2) edge node {$a$} (p12)
  (p02) edge node {$a$} (p12)
  (p02) edge node {$b$} (p012);

\end{tikzpicture}
\end{center}
\caption{The automaton ${\mathfrak{Mirror}}(A)$.}\label{mirror-modifier}
\end{figure}
\end{example}
For some usual operations on languages 
($\mathrm{Comp}$, $\mathrm{Union}$, $\mathrm{Inter}$, $\mathrm{Xor}$, $\mathrm{Conc}$, $\mathrm{Star}$ and $\mathrm{SRoot}$), we give  one of their modifiers in Table \ref{table operations}. Thus these  operations are describable.
\subsection{Properties}
We want to show that there exists non-describable operations (Example \ref{example-non-describable}). Corollary \ref{cor-alpha} allows us to show this fact. We also want to prove that there exists a modifier for the composition of describable operations (Section \ref{sect-appli}). In order to do this, we use the fact that the composition of two modifiers is a modifier (Proposition \ref{prop-composition} and Corollary \ref{prop-describable}).\\

We thus investigate two kinds of  properties:
\begin{itemize}
\item commutation with respect to  alphabetic renaming and restriction,
\item stability by composition.
\end{itemize}

For the first property, we consider 
    three alphabets $X,X'$ and $Y$ with $X\cap X'=\emptyset$,  a bijection $\varphi$  from $X$ to $Y$, naturally extended as an isomorphism of monoids from $X^{*}$ to   $Y^{*}$,  and $\eta:2^{(X\cup X')^{*}}\rightarrow 2^{Y^{*}}$ defined by $\eta(L)=\varphi(L\cap X^*)$. We have
\begin{claim}\label{claim1}
If $A=(X\cup X',Q,i,F,\delta)$ is a DFA recognizing a language $L$ then  $\eta (L)$ is the regular  language recognized by $(Y,Q,i,F,\delta_{\bullet})$ where $\delta_{\bullet}^y=\delta^{\varphi^{-1}(y)}$ for any $y\in Y$.
\end{claim}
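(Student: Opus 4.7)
My plan is to introduce the notation $A_\bullet = (Y,Q,i,F,\delta_\bullet)$ for the automaton described in the statement, and reduce everything to a single compatibility lemma between the extended transition functions of $A$ and $A_\bullet$. Concretely, I would first show by induction on $|w|$ that, for every state $q\in Q$ and every word $w\in Y^*$,
\[
\delta_\bullet(q,w)=\delta(q,\varphi^{-1}(w)).
\]
The base case $w=\varepsilon$ is immediate since $\varphi^{-1}(\varepsilon)=\varepsilon$ and both sides equal $q$. For the inductive step I write $w=w'y$ with $y\in Y$, apply the definition $\delta_\bullet^y=\delta^{\varphi^{-1}(y)}$ together with the induction hypothesis, and use the fact that $\varphi^{-1}\colon Y^*\to X^*$ is a morphism of monoids (so $\varphi^{-1}(w'y)=\varphi^{-1}(w')\varphi^{-1}(y)$) to conclude.

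Once this lemma is established, the rest is unwinding definitions. A word $w\in Y^*$ belongs to $L(A_\bullet)$ if and only if $\delta_\bullet(i,w)\in F$, which by the lemma is equivalent to $\delta(i,\varphi^{-1}(w))\in F$, i.e.\ $\varphi^{-1}(w)\in L$. Since $w\in Y^*$ forces $\varphi^{-1}(w)\in X^*$ automatically, this is the same as $\varphi^{-1}(w)\in L\cap X^*$, and by applying the bijection $\varphi$ this is in turn equivalent to $w\in \varphi(L\cap X^*)=\eta(L)$. Hence $L(A_\bullet)=\eta(L)$, which is exactly what is claimed, and regularity of $\eta(L)$ follows since it is recognised by the finite automaton $A_\bullet$.

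There is no real obstacle here. The only point requiring any care is to make sure the extended transition function of $A_\bullet$ is well defined on $Y^*$ (which is automatic because $A_\bullet$ is a CDFA over $Y$) and to remember that both $\varphi$ and $\varphi^{-1}$ are monoid morphisms, which is what makes the inductive step go through cleanly. I would also briefly remark that $Y$-only words are precisely those whose $\varphi^{-1}$-preimage lies in $X^*$, to justify why intersecting with $X^*$ in $\eta$ matches the fact that $A_\bullet$ simply ignores the letters of $X'$.
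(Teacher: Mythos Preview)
Your proof is correct. The paper itself states Claim~\ref{claim1} without proof, treating it as a routine observation, so there is nothing to compare against beyond noting that your argument supplies exactly the standard verification one would expect: the induction on $|w|$ establishing $\delta_\bullet(q,w)=\delta(q,\varphi^{-1}(w))$, followed by unwinding the definition of $\eta$.
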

 
\begin{proposition}
	Let $\otimes$ be a $k$-ary describable operation. 
		 For any $\underline{L}\in (2^{(X\cup X')^*})^k$ we have
	
	\[{}
	\otimes(\eta(L_1),\cdots,\eta(L_k))=\eta(\otimes \underline{L}).
	\]
\end{proposition}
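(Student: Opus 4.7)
The plan is to exploit describability to translate the identity between languages into an identity between automata built by the same modifier. Let $\mathfrak m=(\mathfrak Q,\iota,\mathfrak f,\mathfrak d)$ be a modifier such that $L(\mathfrak m\underline{A})=\otimes(L(A_1),\ldots,L(A_k))$ for every tuple of CDFAs $\underline{A}$. First I would fix, for each $j\in\{1,\ldots,k\}$, a CDFA $A_j=(X\cup X',Q_j,i_j,F_j,\delta_j)$ recognizing $L_j$, and build from it the automaton $B_j=(Y,Q_j,i_j,F_j,\delta_{j,\bullet})$ with $\delta_{j,\bullet}^{y}=\delta_j^{\varphi^{-1}(y)}$ given by Claim~\ref{claim1}, so that $L(B_j)=\eta(L_j)$.

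The crux of the proof is to compare $\mathfrak m\underline{B}$ with $\mathfrak m\underline{A}$. Since $\mathfrak Q$, $\iota$ and $\mathfrak f$ depend only on $(\underline{Q},\underline{i},\underline{F})$, which is identical for $\underline{A}$ and $\underline{B}$, the two automata share the same state set, initial state and set of final states; call them $Q$, $i$ and $F$. Denote by $\delta$ the transition function of $\mathfrak m\underline{A}$ and by $\delta_{\bullet}$ that of $\mathfrak m\underline{B}$. For every $y\in Y$, the definition of a modifier gives
\[
\delta_{\bullet}^{y}=\mathfrak d(\underline{i},\underline{F},\underline{\delta_{\bullet}^{y}})=\mathfrak d\bigl(\underline{i},\underline{F},\underline{\delta^{\varphi^{-1}(y)}}\bigr)=\delta^{\varphi^{-1}(y)},
\]
the last equality being the definition of $\delta$ on the letter $\varphi^{-1}(y)\in X\subset X\cup X'$. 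Hence $\mathfrak m\underline{B}=(Y,Q,i,F,\delta_{\bullet})$ is exactly the automaton produced by the Claim~\ref{claim1} construction applied to the DFA $\mathfrak m\underline{A}=(X\cup X',Q,i,F,\delta)$.

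Applying Claim~\ref{claim1} one more time, this time to $\mathfrak m\underline{A}$, yields $L(\mathfrak m\underline{B})=\eta(L(\mathfrak m\underline{A}))$. By describability, $L(\mathfrak m\underline{A})=\otimes\underline{L}$ and $L(\mathfrak m\underline{B})=\otimes(\eta(L_1),\ldots,\eta(L_k))$, since each $B_j$ recognizes $\eta(L_j)$. Combining these two identities gives the announced equality.

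The only genuine subtlety is in the central computation: the definition of a modifier explicitly forbids $\mathfrak d$ from depending on the letter itself or on the transition functions of other letters, and that is exactly what makes $\delta_{\bullet}^{y}$ and $\delta^{\varphi^{-1}(y)}$ agree. Once this point is isolated, the rest is just an assembly of describability and Claim~\ref{claim1}, so I do not anticipate any further obstacle.
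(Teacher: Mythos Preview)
Your argument is correct and essentially identical to the paper's proof: both fix CDFAs $A_j$ for the $L_j$, build the relabeled automata over $Y$ via Claim~\ref{claim1} (your $B_j$ are the paper's $A_{\diamond_j}$), and then use the letter-independence of $\mathfrak d$ to identify $\mathfrak m\underline{B}$ with the Claim~\ref{claim1} transform of $\mathfrak m\underline{A}$ (the paper names this automaton $A_{\square}$). The only difference is notational.
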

\begin{proof}
Let $\underline A$ be a $k$-tuple
	of CDFA $A_{j}=(X\cup X', Q_{j},i_j,F_{j},\delta_j)$ such that $L(A_{j})=L_{j}$. 
	Since $\otimes$ is describable, there exists a modifier $\mathfrak m=(\mathfrak Q,\iota, \mathfrak f, \mathfrak d)$ such that $L(\mathfrak m\underline{A})=\otimes \underline{L}$.
	We have $\mathfrak m\underline{A}=(X\cup X',\mathfrak Q\underline{Q},\iota(\underline{Q},\underline{i}, \underline{F}),\mathfrak f(\underline{Q},\underline{i}, \underline{F}),\delta)$ with $ \delta^a=\mathfrak d(\underline{i},\underline{F},\underline{\delta^a})$. Then by Claim \ref{claim1}, the language $\eta(\otimes \underline{L})$ is recognized by the CDFA $A_{\square}= (Y,\mathfrak Q\underline{Q},\iota(\underline{Q},\underline{i},\underline{F}),\mathfrak f(\underline{Q},\underline{i}, \underline{F}),\delta_{\square})$ with $\delta_{\square}^a=\delta^{\varphi^{-1}(a)}=\mathfrak d(\underline{i},\underline{F},\underline{\delta^{\varphi^{-1}(a)}})$.
	Now, let $\underline{{A}_{\diamond}}$ be the $k$-tuple of CDFA $A_{{\diamond}_j}=(Y,{Q_j},i_j,F_j,\delta_{{\diamond}_j})$ with $\delta_{\diamond_j}^a=\delta_j ^{\varphi^{-1}(a)}$.  Clearly, by Claim~\ref{claim1}, $A_{{\diamond}_j}$ recognizes $\eta(L_j)$. Since $\otimes$ is describable, $\mathfrak m\underline{A_{\diamond}}=(Y,\mathfrak Q\underline{Q},\iota(\underline{Q},\underline{i},\underline{F}),\mathfrak f(\underline{Q},\underline{i},\underline{F}),\delta_{\diamond})$ with  $\delta^a_{\diamond}=\mathfrak d(\underline{i},\underline{F}, \underline{\delta_{\diamond}^a})= \mathfrak d(\underline{i}, \underline{F},\underline{\delta^{\varphi^{-1}(a)}})=\delta_\square^a$ which ends the proof.
%
\end{proof}

Immediately as special cases of the previous proposition, we obtain:
\begin{corollary}\label{cor-alpha}
	Let $\otimes$ be a $k$-ary describable  operation and $Y$ be an alphabet. Let $\underline L$ be a $k$-tuple of regular languages over $Y$. Then
	\begin{itemize}
		\item If $X\subset Y$ then $\otimes(L_{1}\cap X^{*},\cdots,L_{k}\cap X^{*})=\otimes\underline L\cap X^{*}$.
		\item For any bijection $\sigma:Y\rightarrow Y$ extended as an automorphism of monoids, 
		we have  $\otimes(\sigma(L_{1}),\cdots,\sigma(L_{k}))$ $=\sigma(\otimes\underline L)$.
	\end{itemize}
\end{corollary}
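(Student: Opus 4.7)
The plan is to derive both items as immediate specialisations of the preceding proposition, by carefully choosing the parameters $X$, $X'$, $Y$ and $\varphi$ in each case. The proposition provides a generic identity $\otimes(\eta(L_1),\ldots,\eta(L_k))=\eta(\otimes\underline L)$ in terms of the map $\eta(L)=\varphi(L\cap X^*)$, and both statements of the corollary are exactly what one obtains by instantiating $\eta$ either as a restriction or as a renaming.

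For the first item, given a subalphabet $X\subseteq Y$ and languages $\underline L$ over $Y$, I will apply the proposition with this $X$, with $X':=Y\setminus X$ (so that $X\cup X'=Y$ and $X\cap X'=\emptyset$ as required), with target alphabet $X$ itself, and with $\varphi:=\mathrm{id}_X$. Because $\varphi$ is the identity, the associated map is $\eta(L)=L\cap X^*$, and the proposition yields $\otimes(L_1\cap X^*,\ldots,L_k\cap X^*)=\otimes\underline L\cap X^*$, which is exactly the first claim.

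For the second item, given a bijection $\sigma:Y\to Y$, I will apply the proposition with $X:=Y$, with $X':=\emptyset$, with target alphabet $Y$, and with $\varphi:=\sigma$. Then $(X\cup X')^*=Y^*$, so each $L_i\subseteq Y^*$ lies in the right ambient set, and $\eta(L)=\sigma(L\cap Y^*)=\sigma(L)$ for every $L\subseteq Y^*$. The conclusion of the proposition therefore specialises to $\otimes(\sigma(L_1),\ldots,\sigma(L_k))=\sigma(\otimes\underline L)$, which is the second claim.

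The main obstacle, if there is one, is merely bookkeeping: one has to check that the chosen data satisfy the hypotheses of the proposition (disjointness of $X$ and $X'$, and bijectivity of $\varphi:X\to Y$) and that the map $\eta$ collapses to the desired restriction or renaming under the specialisation. If one prefers to avoid taking $X'=\emptyset$ in the second item, one can pick any nonempty $X'$ disjoint from $Y$, apply the proposition, and then use the first item to discard the contribution of the unused letters; this provides a safety net, but requires no substantive new argument.
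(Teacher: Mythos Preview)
Your proposal is correct and matches the paper's approach: the paper simply states that the corollary follows ``immediately as special cases of the previous proposition'', and what you have written is precisely the unpacking of those two special cases (taking $\varphi=\mathrm{id}_X$ with $X'=Y\setminus X$ for the restriction, and $\varphi=\sigma$ with $X'=\emptyset$ for the renaming). There is nothing to add.
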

\begin{example}\label{example-non-describable}
	This result allows us to build  examples of non-describable operations.
	\begin{itemize}
		\item We consider the binary operation  defined by $\otimes(L_{1},L_{2})=L_{1}\cdot L_{2}^{-1}=\{u\mid uv\in L_{1}\mbox{ for some }v\in L_{2}\}$. This operation is not describable because it violates the first condition of Corollary \ref{cor-alpha}. For instance, let $Y=\{a,b,c\}$, $L_{1}=\{abc\}$, and $L_{2}=\{c\}$. We have
		$\otimes(L_{1}\cap\{a,b\}^{*},L_{2}\cap \{a,b\}^{*})=\emptyset.\emptyset^{-1}=\emptyset$ while 
		$\otimes(L_{1},L_{2})\cap\{a,b\}^{*}=\{ab\}$.
		\item{}We consider the unary operation  defined by $\otimes (L)=L\setminus \{a\}$ if the words $a$ and $a^{2}$ belong to $L$ and $\otimes (L)=L$ otherwise. This operation satisfies the first condition  of  Corollary \ref{cor-alpha} but it violates the second one. Indeed, if $Y=\{a,b\}$ then $\otimes (\{a,a^{2}\})=\{a^{2}\}$ while $\otimes(\{b,b^{2}\})=\{b,b^{2}\}$. So it is not describable.
	\end{itemize}
\end{example}
\begin{rmk}
	There exist $k$-modifiers that can not be associated to operations. For instance, consider the modifier $\mathfrak {Fto}1=(\mathfrak Q,\iota,\mathfrak f, \mathfrak d)$ such that
	\begin{itemize}
		\item
	$\mathfrak QQ=Q$,
	\item $\iota({Q,i,F})=i$.
	\item $\mathfrak{f}({Q,i,F})=F$, 
	\item $\mathfrak d(i,F,\delta^a_1)(q)=\delta_1^a(q)$ if $q\not\in F$ and $\mathfrak d(i,F, \delta_1^a)(q)=\left\{\begin{array}{ll}1&\text{ if }1\in Q\\\delta_1^a(q)&\text{ otherwise}\end{array}\right.$ if $q\in F$.
	
\end{itemize}
If $A_{1}$ and $A'_{1}$ are two deterministic automata recognizing the same language then we have in general $L({\mathfrak {Fto}1}(A_{1}))\neq L({\mathfrak {Fto}1}(A'_{1}))$ because the recognized language depends on the labels of the states of $A_{1}$ and $A'_{1}$. For instance,
the two following automata recognize the same language $a^{2}a^{*}$.
	\begin{center}
\begin{tikzpicture}[node distance=2cm]
\node [state,initial] (p0) {$0$};
\node[state] (p1)[right of=p0]{$1$};
\node[state,accepting] (p2)[right of=p1]{$2$};
\path[->]
(p0)edge node{$a$} (p1)
(p1)edge node {$a$} (p2)
(p2)edge[loop] node[swap] {$a$} (p2)
;
\end{tikzpicture}
\begin{tikzpicture}[node distance=2cm]
\node [state,initial] (p0) {$0$};
\node[state] (p2)[right of=p0]{$2$};
\node[state,accepting] (p1)[right of=p2]{$1$};
\path[->]
(p0)edge node{$a$} (p2)
(p2)edge node {$a$} (p1)
(p1)edge[loop] node[swap] {$a$} (p1)
;
\end{tikzpicture}
\end{center}
But applying ${\mathfrak {Fto}1}$ on the first one gives
	\begin{center}
\begin{tikzpicture}[node distance=2cm]
\node [state,initial] (p0) {$0$};
\node[state] (p1)[right of=p0]{$1$};
\node[state,accepting] (p2)[right of=p1]{$2$};
\path[->]
(p0) edge node{$a$} (p1)
(p1)edge[bend right] node {$a$} (p2)
(p2)edge[bend right] node[swap] {$a$} (p1)
;
\end{tikzpicture}
\end{center}
which recognizes $(aa)^{+}$ while ${\mathfrak {Fto}1}$ lets  the second automaton  unchanged.
\end{rmk}

Modifiers can be seen as functions on automata and as such can be composed as follows:
$$\mathfrak m_1\circ _j \mathfrak m_2(A_1,\ldots, A_{k_1+k_2-1})=\mathfrak m_1 (A_1,\ldots, A_{j-1},\mathfrak m_2(A_j,\ldots, A_{j+k_2-1}),A_{j+k_2},\ldots, A_{k_1+k_2-1}).$$
We have
\begin{proposition}\label{prop-composition}
The composition $\mathfrak m_1\circ _j \mathfrak m_2$ is a modifier.
\end{proposition}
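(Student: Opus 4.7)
The plan is to construct explicitly the $4$-tuple $(\mathfrak{Q},\iota,\mathfrak{f},\mathfrak{d})$ that witnesses $\mathfrak{m}_1\circ_j\mathfrak{m}_2$ as a modifier, by plugging the second modifier into the $j$-th slot of the first. Writing $\mathfrak{m}_1=(\mathfrak{Q}_1,\iota_1,\mathfrak{f}_1,\mathfrak{d}_1)$ and $\mathfrak{m}_2=(\mathfrak{Q}_2,\iota_2,\mathfrak{f}_2,\mathfrak{d}_2)$, and given CDFA $A_1,\ldots,A_{k_1+k_2-1}$ with $A_\ell=(\Sigma,Q_\ell,i_\ell,F_\ell,\delta_\ell)$, denote $B=\mathfrak{m}_2(A_j,\ldots,A_{j+k_2-1})=(\Sigma,Q_B,i_B,F_B,\delta_B)$. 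By definition of $\mathfrak{m}_2$, the data $Q_B,i_B,F_B$ are mappings of $(Q_j,\ldots,Q_{j+k_2-1};i_j,\ldots,i_{j+k_2-1};F_j,\ldots,F_{j+k_2-1})$, while $\delta_B^a$ is a mapping of the triple $(i_j,\ldots,i_{j+k_2-1};F_j,\ldots,F_{j+k_2-1};\delta_j^a,\ldots,\delta_{j+k_2-1}^a)$.

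Next, I would define the candidate modifier on the $k_1+k_2-1$ inputs by substituting the components of $B$ into the slot $j$ of $\mathfrak{m}_1$. Concretely, with the shorthand $\widehat{X}=(X_1,\ldots,X_{j-1},X_B,X_{j+k_2},\ldots,X_{k_1+k_2-1})$ for $X\in\{Q,i,F,\delta^a\}$, set
\[
\mathfrak{Q}\,\underline{Q}=\mathfrak{Q}_1\widehat{Q},\quad
\iota(\underline{Q},\underline{i},\underline{F})=\iota_1(\widehat{Q},\widehat{i},\widehat{F}),\quad
\mathfrak{f}(\underline{Q},\underline{i},\underline{F})=\mathfrak{f}_1(\widehat{Q},\widehat{i},\widehat{F}),
\]
\[
\mathfrak{d}(\underline{i},\underline{F},\underline{\delta^a})=\mathfrak{d}_1(\widehat{i},\widehat{F},\widehat{\delta^a}).
\]

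What remains is to check that each of these assignments only uses the information allowed by the definition of a modifier, and that applying the resulting $4$-tuple to $\underline{A}$ really does produce $\mathfrak{m}_1(A_1,\ldots,A_{j-1},B,A_{j+k_2},\ldots,A_{k_1+k_2-1})$. For $\mathfrak{Q},\iota,\mathfrak{f}$ this is transparent: $Q_B,i_B,F_B$ depend only on $(\underline{Q},\underline{i},\underline{F})$, so the displayed compositions are legitimate mappings of $(\underline{Q},\underline{i},\underline{F})$. For $\mathfrak{d}$, the only subtle point is the paper's convention that $\mathfrak{d}$ should not need $\underline{Q}$ explicitly; but $i_B$ and $F_B$ depend on $\underline{Q}'=(Q_j,\ldots,Q_{j+k_2-1})$, which is itself recoverable from $\underline{\delta^a}'$ (the domain and codomain of each $\delta_\ell^a$ are $Q_\ell$), and $\delta_B^a$ depends only on $(\underline{i}',\underline{F}',\underline{\delta^a}')$. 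Hence $\widehat{i},\widehat{F},\widehat{\delta^a}$ are each determined by $(\underline{i},\underline{F},\underline{\delta^a})$, so our $\mathfrak{d}$ respects the required dependency.

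The equality with the sequential application is then immediate by unfolding the definitions of $\mathfrak{m}_1$ and $\mathfrak{m}_2$ in that order. I do not expect any genuine obstacle: the proof is a bookkeeping verification, and the only point that deserves care is the implicit-$\underline{Q}$ convention for $\mathfrak{d}$, handled by the observation that the state sets are already carried by the transition functions in the $j$-th block used to build $\delta_B^a$.
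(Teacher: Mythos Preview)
Your proposal is correct and follows essentially the same approach as the paper: both construct the composed modifier explicitly by substituting the components of $\mathfrak{m}_2(A_j,\ldots,A_{j+k_2-1})$ into slot $j$ of $\mathfrak{m}_1$, using the same ``hat'' shorthand, and then observe that the result agrees with the sequential application. Your treatment is slightly more careful in justifying why $\mathfrak{d}$ does not need $\underline{Q}$ explicitly, a point the paper leaves implicit.
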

\begin{proof}
Let $\mathfrak m_{1}=(\mathfrak Q^{(1)},\iota^{(1)},\mathfrak f^{(1)},\mathfrak d^{(1)})$
	be a $k_1$-ary modifier and $\mathfrak m_{2}=(\mathfrak Q^{(2)},\iota^{(2)},\mathfrak f^{(2)},\mathfrak d^{(2)})$ be a $k_2$-ary modifier. 
We define the $(k_1+k_2-1)$-ary modifier
	$(\mathfrak Q,\iota,\mathfrak f,\mathfrak d)$ by
		\begin{itemize}
		\item $\mathfrak Q\underline Q=\mathfrak Q^{(1)}\widehat{\underline Q}$ with $\widehat{\underline Q}=(Q_{1},\dots,Q_{j-1},\mathfrak Q^{(2)}(Q_{j},\ldots,Q_{j+k_{2}-1}),Q_{j+k_{2}},\dots,Q_{k_{1}+k_{2}-1})$
                \item $\iota({\underline Q,\underline i,\underline F})=\iota^{(1)}({\widehat{\underline Q},\widehat{\underline i},\widehat{\underline F}})$,\\
                  with $\widehat{\underline i}=(i_1,\ldots,i_{j-1},\iota^{(2)}((Q_j,\ldots,Q_{j+k_2-1}),(i_j,\ldots,i_{j+k_2-1}),(F_j,\ldots,F_{j+k_2-1})),i_{j+k_2},\ldots,i_{k_1+k_2-1})$ and $\widehat{\underline F}=(F_1,\ldots,F_{j-1},\mathfrak f^{(2)}((Q_j,\ldots,Q_{j+k_2-1}),(i_j,\ldots,i_{j+k_2-1}),(F_j,\ldots,F_{j+k_2-1})),F_{j+k_2},\ldots,F_{k_1+k_2-1})$.
                \item $\mathfrak f({\underline Q,\underline i,\underline F})=\mathfrak f^{(1)}({\widehat{\underline Q},\widehat{\underline i},\widehat{\underline F}})$.
		\item $\mathfrak d(\underline i,\underline F,\underline{\delta^a})=\mathfrak d^{(1)}(\widehat{\underline i},\widehat{\underline F},\widehat{\underline \delta}))$\\
                  with $\widehat{\underline \delta}=(\delta_1^a,\ldots,\delta_{j-1}^a,\mathfrak d^{(2)}((i_j,\ldots,i_{j+k_2-1}),(F_j,\ldots,F_{j+k_2-1}),(\delta_{j}^a,\ldots,\delta_{j+k_{2}-1}^a)),\delta_{j+k_2}^a,\ldots,\delta_{k_1+k_{2}-1}^a)$
	        \end{itemize}
	        We check that $(\mathfrak Q,\iota,\mathfrak f,\mathfrak d)$ acts on automata as $\mathfrak m_1\circ_j \mathfrak m_2$.
\end{proof}

 \begin{corollary}\label{prop-describable}
 	Let $\otimes$ be a $k_{1}$-ary $\mathfrak m_1$-describable operation and 
	$\oplus$ be a $k_{2}$-ary $\mathfrak m_2$-describable operation. Then the operation 
	$\otimes\circ_{j}\oplus$ 
	is a $(k_{1}+k_{2}-1)$-ary $\mathfrak (m_1\circ _j \mathfrak m_2)$-describable operation for any $j\in \{1,\ldots ,k_1\}$.
\end{corollary}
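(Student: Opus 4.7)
The plan is to reduce the corollary to a language-level substitution using the describability of each operation, together with the fact established in Proposition \ref{prop-composition} that $\mathfrak{m}_1 \circ_j \mathfrak{m}_2$ is a modifier. So I would start by fixing an arbitrary $(k_1+k_2-1)$-tuple of CDFAs $\underline{A}=(A_1,\ldots,A_{k_1+k_2-1})$ over a common alphabet $\Sigma$, and compute $L((\mathfrak{m}_1\circ_j \mathfrak{m}_2)\underline{A})$ step by step.

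First I would unfold the definition of $\circ_j$ on modifiers to write
\[
(\mathfrak{m}_1\circ_j \mathfrak{m}_2)\underline{A}=\mathfrak{m}_1(A_1,\ldots,A_{j-1},\mathfrak{m}_2(A_j,\ldots,A_{j+k_2-1}),A_{j+k_2},\ldots,A_{k_1+k_2-1}).
\]
Call $B=\mathfrak{m}_2(A_j,\ldots,A_{j+k_2-1})$; this is a CDFA over $\Sigma$ because $\mathfrak{m}_2$ is a modifier. By $\mathfrak{m}_2$-describability of $\oplus$ we have $L(B)=\oplus(L(A_j),\ldots,L(A_{j+k_2-1}))$. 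Then the outer application $\mathfrak{m}_1(A_1,\ldots,A_{j-1},B,A_{j+k_2},\ldots,A_{k_1+k_2-1})$ is a CDFA over $\Sigma$ because $\mathfrak{m}_1$ is a modifier, and by $\mathfrak{m}_1$-describability of $\otimes$ its recognized language is
\[
\otimes\bigl(L(A_1),\ldots,L(A_{j-1}),L(B),L(A_{j+k_2}),\ldots,L(A_{k_1+k_2-1})\bigr).
\]
Substituting the expression for $L(B)$ and using the definition of $\circ_j$ on operations yields
\[
L((\mathfrak{m}_1\circ_j \mathfrak{m}_2)\underline{A}) = (\otimes\circ_j\oplus)(L(A_1),\ldots,L(A_{k_1+k_2-1})),
\]
which is exactly what is required for $(\mathfrak{m}_1\circ_j\mathfrak{m}_2)$-describability of $\otimes\circ_j\oplus$.

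There is essentially no hard step: Proposition \ref{prop-composition} already handles the syntactic verification that $\mathfrak{m}_1\circ_j\mathfrak{m}_2$ is a bona fide modifier, so only the semantic (language) identity remains, and it falls out by a single substitution. The only point that requires mild care is bookkeeping of the indices $1,\ldots,j-1,j,\ldots,j+k_2-1,j+k_2,\ldots,k_1+k_2-1$ so that the inner $k_2$-block is correctly identified as the argument to $\mathfrak{m}_2$ (resp.\ $\oplus$) and the outer arguments line up with the $k_1$ slots of $\mathfrak{m}_1$ (resp.\ $\otimes$); I would simply write the tuples out explicitly to make this transparent.
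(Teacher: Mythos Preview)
Your proof is correct and follows essentially the same approach as the paper: both unfold the definition of $\mathfrak m_1\circ_j\mathfrak m_2$ on automata, apply the describability of $\oplus$ and $\otimes$ in turn to pass to languages, and then invoke Proposition~\ref{prop-composition} to conclude that the composite is a modifier. The only difference is presentational (you name the intermediate automaton $B$ and proceed from automata to languages, whereas the paper writes the chain of equalities starting from the language side), but the argument is the same.
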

\begin{proof}
Let $L_{1},\dots,L_{k_{1}+k_{2}-1}$ be regular languages recognized respectively by  DFAs $A_1,\dots,A_{k_{1}+k_{2}-1}$.
	 Since $\oplus$ and $\otimes$ are describable, we have 
	 $$\begin{array}{ll}&\otimes\circ_j \oplus(L_1,\ldots, L_{k_{1}+k_{2}-1})\\
	 =&\otimes(L_{1},\dots,L_{j-1},
	\oplus(L_{j},\dots,L_{j+k_{2}-1}),L_{j+k_{2}},\dots,L_{k_{1}+k_{2}-1})\\=&
	L(\mathfrak m_1 (A_1,\ldots A_{j-1},\mathfrak m_2(A_j,\ldots, A_{j+k_2-1}),A_{j+k_2},\ldots, A_{k_1+k_2-1}))\\
	=&L( {\mathfrak m_{1}\circ_{j}\mathfrak m_{2}}(A_{1},\ldots,A_{k_{1}+k_{2}-1}))
	.\end{array}$$
From Proposition \ref{prop-composition}, $\mathfrak m_1\circ_j \mathfrak m_2$ is a modifier. Therefore $\otimes\circ_{j}\oplus$ is describable.
\end{proof}

\section{Monsters}\label{sect-monsters}

One-monster automata of size $n$  are minimal DFAs having $n^n$ letters representing  every function from $\IntEnt{n}$ to $\IntEnt{n}$. There are $2^n$ different $1$-monster automata depending on the set of their final states. The idea of a $k$-monster is to have a common alphabet for $k$ automata.

The idea of using combinatorial objects to denote letters has already been used by Sakoda and Sipser \cite{SS78} 
to obtain  results for two-way automata,
or by Birget \cite{Bir92} to obtain deterministic state complexity.

\subsection{Definitions}
\begin{definition}
A $k$-monster  is a $k$-tuple of automata  $\underline{\mathrm{M}}_{\underline{n},\underline{F}}=(M_1,\ldots,M_k)$ where each $M_j=(\Sigma,\IntEnt{n_j},$ $0, F_j,\delta_j)$ is defined by
\begin{itemize}
\item the common alphabet  $\Sigma=\IntEnt{n_1}^{\IntEnt{n_1}}\times \IntEnt{n_2}^{\IntEnt{n_2}}\times \cdots \times \IntEnt{n_k}^{\IntEnt{n_k}}$, 
\item the set of states  $\IntEnt{n_j}$,
\item the initial state  $0$,
\item the set of final states  $F_j$,
\item the transition function $\delta_j$  defined by $\delta_j(q,\underline{g})={g_j}(q)$ for $\underline{g}=(g_1,\ldots, g_k)\in \Sigma$, \textit{i.e.} $\underline{\delta^{\underline{g}}}=\underline{g}$.

\end{itemize}
\end{definition}


%

 

\begin{example}[$k$-monster for $k=$1 and $k=2$]
\ \\
\begin{itemize}
\item The $1$-monster  $\mathrm{M}_{2,\{1\}}$   is given by the following automaton

\begin{center}
\begin{tikzpicture}[node distance=2cm]
\node[state,initial](p0){$0$};{}
\node[state,accepting](p1) [right of=p0] {$1$};
\path[->]
(p0)edge[loop ] node [swap]{$a,c$} (p0)
(p0)edge[bend left] node {$b,d$} (p1)
(p1)edge[loop ] node [swap]{$a,b$} (p1)
(p1)edge[bend left] node{$c,d$}(p0);
\end{tikzpicture}
\end{center}
Each symbol codes a function $\{0,1\}\rightarrow \{0,1\}$.
\[{}
a=[01],\ b=[11],\ c=[00],\ \mbox{and } d=[10].
\]

\item The $2$-monster  $\mathrm{M}_{(2,2),(\{1\},\{1\})}$   is given by the following pair of automata on an alphabet with $2^{2}\times 2^{2}=16$ symbols where $a_{i,\_}$ (respectively $a_{\_,j}$) denotes the set of transitions $a_{i,x}$ (respectively $a_{x,j}$) for $x\in \{1,\ldots, 4\}$:
\begin{center}
			\begin{tikzpicture}[node distance=3cm]
			\node[state,initial] (p0) {$0$};
			\node[state,accepting] (p1) [right of=p0] {$1$};
			\node[state,initial] (q0) [right of=p1] {$0$};
			\node[state,accepting] (q1) [right of=q0] {$1$};
			\path[->]
			(p0) edge[loop above] node {$a_{1,\_},a_{3,\_}$} (p0)
        (p0) edge[bend left] node {$a_{2,\_},a_{4,\_}$} (p1)
        (p1) edge[loop above] node[swap] {$a_{1,\_},a_{2,\_}$} (p1)
        (p1) edge[bend left] node {$a_{3,\_},a_{4,\_}$} (p0);
        \path[->]
        (q0) edge[loop above] node {$a_{\_,1},a_{\_,3}$} (q0)
        (q0) edge[bend left] node {$a_{\_,2},a_{\_,4}$} (q1)
        (q1) edge[bend left] node {$a_{\_,3},a_{\_,4}$} (q0)
        (q1) edge[loop above] node {$a_{\_,1},a_{\_,2}$} (q1);
    \end{tikzpicture}
\end{center}

Each symbol codes a pair of functions, denoted by the word of their image. 
\[{}
\begin{array}{llll}
	a_{1,1}=[01,01]&a_{1,2}=[01,11]&a_{1,3}=[01,00]&a_{1,4}=[01,10]\\
	a_{2,1}=[11,01]&a_{2,2}=[11,11]&a_{2,3}=[11,00]&a_{2,4}=[11,10]\\
	a_{3,1}=[00,01]&a_{3,2}=[00,11]&a_{3,3}=[00,00]&a_{3,4}=[00,10]\\
	a_{4,1}=[10,01]&a_{4,2}=[10,11]&a_{4,3}=[10,00]&a_{4,4}=[10,10].
\end{array}
\]

For instance, $a_{1,2}=[01,11]$ means that the symbol $a_{1,2}$ labels a transition from $0$ to $0$ and a transition from $1$ to $1$ in the first automaton and a transition from $0$ to $1$ and a transition from $1$ to $1$ in the second automaton.\\
\end{itemize}
\end{example}

\subsection{Using monsters to compute state complexity} 
If an operation is describable, it is sufficient to study  the behavior of its modifiers over monsters to compute its state complexity.
\begin{theorem}\label{theorem-monster}
Let $\mathfrak{m}$ be a modifier and $\otimes$ be an $\mathfrak{m}$-describable operation. We have
$$\mathrm{sc}_{\otimes}\underline{n}=\mathrm{max}\{\displaystyle\#_{Min}(\mathfrak{m}\underline{\mathrm{M}}_{\underline{n},\underline{F}})\mid \underline{F}\subset \IntEnt{n_1}\times \ldots \times \IntEnt{n_k}\}.$$
\end{theorem}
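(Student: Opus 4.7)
The plan is to prove two inequalities. The key structural consequence of $\mathfrak m$-describability is that for any tuple of CDFA $\underline{A}$ with $L(A_j)=L_j$, one has $L(\mathfrak m\underline{A})=\otimes\underline{L}$, so $\mathrm{sc}(\otimes\underline{L})=\#_{Min}(\mathfrak m\underline{A})$ whenever each $A_j$ is the minimal CDFA of $L_j$. Both sides of the theorem are therefore maxima of $\#_{Min}\circ\mathfrak m$ over different families of $k$-tuples of CDFA, and the proof reduces to comparing these two families.

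For $\mathrm{sc}_\otimes(\underline{n})\le\max_{\underline{F}}\#_{Min}(\mathfrak m\underline{\mathrm{M}}_{\underline{n},\underline{F}})$, I would take any $\underline{L}\in\mathcal{L}_{n_1}\times\cdots\times\mathcal{L}_{n_k}$ together with the minimal CDFA $A_j$ of $L_j$. Using the renaming half of Corollary~\ref{cor-alpha}, I may assume $A_j=(\Sigma,\IntEnt{n_j},0,F_j,\delta_j)$. The map $\varphi\colon\Sigma\to\IntEnt{n_1}^{\IntEnt{n_1}}\times\cdots\times\IntEnt{n_k}^{\IntEnt{n_k}}$ defined by $a\mapsto(\delta_1^a,\ldots,\delta_k^a)$ extends to a monoid morphism $\Sigma^*\to\Sigma_M^*$ into the monster alphabet. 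Because a modifier reads only $(\underline{Q},\underline{i},\underline{F})$ and the per-letter transitions, $\mathfrak m\underline{A}$ and $\mathfrak m\underline{\mathrm{M}}_{\underline{n},\underline{F}}$ share the same state space, initial state, and final state set, and $\delta_{\mathfrak m\underline{A}}(q,a)=\delta_{\mathfrak m\underline{\mathrm{M}}_{\underline{n},\underline{F}}}(q,\varphi(a))$ for every state $q$ and every $a\in\Sigma$. Consequently every state accessible in $\mathfrak m\underline{A}$ is accessible in $\mathfrak m\underline{\mathrm{M}}_{\underline{n},\underline{F}}$, and any $\Sigma^*$-word distinguishing two states of $\mathfrak m\underline{A}$ lifts through $\varphi$ to a $\Sigma_M^*$-word distinguishing the same pair in the monster, giving $\#_{Min}(\mathfrak m\underline{A})\le\#_{Min}(\mathfrak m\underline{\mathrm{M}}_{\underline{n},\underline{F}})$; taking the maximum over $\underline{L}$ yields the claimed inequality.

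For the reverse, I show that any $\underline{F}$ with $\emptyset\subsetneq F_j\subsetneq\IntEnt{n_j}$ for all $j$ produces a valid witness tuple $\underline{L}=(L(M_1),\ldots,L(M_k))\in\mathcal{L}_{\underline{n}}$. Each $M_j$ carries every transformation of $\IntEnt{n_j}$ as a letter, so for distinct $p,q\in\IntEnt{n_j}$ the transformation sending $p$ into $F_j$ and $q$ outside $F_j$ is available as a letter of $M_j$ and distinguishes them, hence $M_j$ is minimal and $\mathrm{sc}(L(M_j))=n_j$. Describability then gives $\#_{Min}(\mathfrak m\underline{\mathrm{M}}_{\underline{n},\underline{F}})=\mathrm{sc}(\otimes\underline{L})\le\mathrm{sc}_\otimes(\underline{n})$. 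Degenerate configurations where some $F_j\in\{\emptyset,\IntEnt{n_j}\}$ collapse $M_j$ to a single Nerode class, and I would handle them by reducing to the non-degenerate case.

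The main obstacle is the Nerode-refinement step in the first inequality, namely the verification that $\mathfrak m\underline{\mathrm{M}}_{\underline{n},\underline{F}}$ distinguishes every pair of states that $\mathfrak m\underline{A}$ distinguishes. This rests squarely on the modifier definition, which guarantees that $\mathfrak d$ produces identical transition functions whether its third argument is read as the tuple $\underline{\delta^a}$ arising from $a\in\Sigma$ or as the monster letter $\varphi(a)\in\Sigma_M$; without this clause the letter $\varphi(a)$ might act on the monster in a way unrelated to the action of $a$ on $\underline{A}$. A secondary delicacy is the degenerate final-set case in the reverse inequality, which is not covered by the direct witness argument and needs to be bounded by non-degenerate configurations.
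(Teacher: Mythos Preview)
Your argument for the inequality $\mathrm{sc}_{\otimes}(\underline{n})\le\max_{\underline{F}}\#_{Min}(\mathfrak m\underline{\mathrm M}_{\underline n,\underline F})$ is essentially the paper's proof: relabel so that $A_j=(\Sigma,\IntEnt{n_j},0,F_j,\delta_j)$, use the letter map $a\mapsto\underline{\delta^a}$ into the monster alphabet, and observe from the modifier axioms that $\delta_{\mathfrak m\underline A}^{a}=\delta_{\mathfrak m\underline{\mathrm M}}^{\underline{\delta^a}}$, whence accessibility and separability transfer. The paper writes exactly this computation and stops there.

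In fact the paper proves \emph{only} this direction; it does not argue the reverse inequality at all. Your discussion of $\ge$ therefore goes beyond the paper. For non-degenerate $\underline F$ (each $\emptyset\subsetneq F_j\subsetneq\IntEnt{n_j}$) your argument is correct: each $M_j$ is minimal since every transformation of $\IntEnt{n_j}$ occurs as a letter, so $L(M_j)\in\mathcal L_{n_j}$ and describability gives $\#_{Min}(\mathfrak m\underline{\mathrm M}_{\underline n,\underline F})=\mathrm{sc}(\otimes\underline L)\le\mathrm{sc}_\otimes(\underline n)$.

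The degenerate case you flag is a genuine loose end, not a defect of your approach relative to the paper. When some $F_j\in\{\emptyset,\IntEnt{n_j}\}$ one has $\mathrm{sc}(L(M_j))=1\ne n_j$, so $\underline L\notin\mathcal L_{\underline n}$ and the bound $\#_{Min}(\mathfrak m\underline{\mathrm M}_{\underline n,\underline F})\le\mathrm{sc}_\otimes(\underline n)$ does not follow from the definition of $\mathrm{sc}_\otimes$ alone; nor is there any general monotonicity of $\mathrm{sc}_\otimes$ in its arguments that would let you ``reduce to the non-degenerate case'' as you suggest. The paper simply does not address this, and in its applications the maximum is always exhibited at a non-degenerate $\underline F$, so the issue never bites. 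If you want a clean statement, restrict the maximum on the right-hand side to non-degenerate $\underline F$: the $\le$ direction still goes through (any $\underline L\in\mathcal L_{\underline n}$ yields minimal $A_j$ with non-degenerate $F_j$), and the $\ge$ direction is then exactly your argument.
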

\begin{proof}
	Let  $\underline{A}$ be a k-tuple of automata having   $\underline{n}$ states and having  $\underline{F}$ as set of final states recognizing a $k$-tuple of languages $\underline{L}$ over an alphabet $\Sigma$. Up to a relabelling, we assume that $A_i=(\Sigma,\IntEnt{n_i},0,F_i,\delta_i)$ for $i\in \{1,\ldots,k\}$. \\ 
         Let $\delta_A$ be the transition function of $\mathfrak{m}\underline{A}$, and $\delta_M$ the transition function of $\mathfrak{m}\underline{\mathrm{M}}_{\underline{n},\underline{F}}$. By definition of a modifier, the states of $\mathfrak{m}\underline{A}$ and of $\mathfrak{m}\underline{\mathrm{M}}_{\underline{n},\underline{F}}$ are the same. For any letter $a$, and any state $q$ of $\mathfrak{m}\underline{A}$, we have:
        \[\delta _A^a(q)=\mathfrak d((0,\ldots,0),\underline{F},\underline{\delta^a})(q)=\mathfrak d((0,\ldots,0),\underline{F},\underline{\delta^{\underline{\delta^a}}_{M}})(q)=
        \delta _M^{\underline{\delta^a}}(q).\]
        And so, for any word $w$ over alphabet $\Sigma$:
        \[\delta_A^w(q)=\delta_M^{\underline{\delta^w}}(q).\]
        Therefore, all states accessible in $\mathfrak{m}\underline{A}$ are also accessible in $\mathfrak{m}\underline{\mathrm{M}}_{\underline{n},\underline{F}}$, and, for any word $w$ over the alphabet $\Sigma$, $\delta_A^w(q) \in \mathfrak f((\IntEnt{n_1},\ldots ,\IntEnt{n_k}),(0,\ldots,0),\underline{F})$ if and only if  $\delta_M^{\underline{\delta^w}}(q) \in\mathfrak f((\IntEnt{n_1},\ldots ,\IntEnt{n_k}),$ $(0,\ldots,0),$ $\underline{F})$, which implies that all pairs of states separable in $\mathfrak{m}\underline{A}$ are also separable in $\mathfrak{m}\underline{\mathrm{M}}_{\underline{n},\underline{F}}$.
	Therefore, $$\#_{\mathrm{Min}}\mathfrak{m}\underline{A} \leq \#_{\mathrm{Min}}\mathfrak{m}\underline{\mathrm{M}}_{\underline{n},\underline{F}}.$$
\end{proof}

\begin{example}[Mirror modifier of a $1$-monster] Let us now  compute the automaton ${\mathfrak{Mirror}}(\mathrm{M}_{n_1,\{n_1-1\}})$  as in Example \ref{Mirror-modifier}.

\begin{figure}
\begin{center}
\begin{tikzpicture}[node distance=2cm, bend angle=20]
\node [state,accepting] (p0) at (0,0) {$\{0\}$};
\node[state] (pv) at (2,-1.5){$\emptyset$};
\node[state,accepting] (p01) at (-3,-1.5) {$\{0,1\}$};
\node[state,initial right] (p1)at (0,-3) {$\{1\}$};
\path[->]
(p0)edge node[swap]{$c$} (p01)
(p0)edge node{$b$} (pv)
(p0)edge[bend right] node [swap]{$d$} (p1)
(p0)edge[loop,in= 125, out=65, distance= 1cm] node [swap] {$a$} (p0)
(pv)edge[loop, in= 35, out=-25, distance= 1cm] node[swap] {$a,b,c,d$} (pv)
(p1)edge[bend right] node[swap]{$d$} (p0)
(p1)edge[loop, in= 305, out=245, distance= 1cm] node [swap]{$a$} (p1)

(p1) edge node[swap] {$b$} (p01)
(p1) edge node [swap]{$c$} (pv)
(p01) edge[loop,in= 215, out=155, distance= 1cm] node [swap]{$a,b,c,d$} (p01);
\end{tikzpicture}
\end{center}
\caption{The automaton ${\mathfrak{Mirror}}(\mathrm{M}_{2,\{1\}}$).} 
\end{figure}
We show that the automaton $\mathfrak{Mirror}(\mathrm{M}_{n_1,\{n_1-1\}})$ is minimal  when  $n_1>1$.
Indeed,
\begin{itemize}
	\item Each state is accessible. Let $g_E$ be the symbol that sends each element of a set $E\subset \IntEnt{n_1}$ to $n_1-1$ and the others ($\IntEnt{n_1}\setminus E$) to $0$. Then, we have
	$\delta^{g_E}(n_1-1)=g_E^{-1}(n_1-1)=E$ (Notice that it also works with $E=\emptyset$).
	 \item States are pairwise non-equivalent. Let $E$ and $E'$ be two distinct states of $\mathfrak{Mirror}(\mathrm{M}_{n_1,\{n_1-1\}})$. We assume there exists $i\subset E\setminus E'$. Let $g$ be the symbol sending $0$ to $i$ and the other states to $j\neq i$. The state $\delta^g(E)$ is final because $\{0\}=g^{-1}(i)\subset \delta^g(E)$ while $\delta^g(E')\subset \IntEnt{n_1}\setminus \{0\}$.
	 
\end{itemize}
\end{example}

We can describe in an algorithm the way to compute the state complexity of an operation using monsters and modifiers.
\begin{enumerate}
\item Describing the transformation with the help of a modifier whose states are represented by combinatorial objects;
\item Applying the modifier to well-chosen $k$-monsters. We will have to discuss  the final states;
\item Minimizing the  resulting automaton and estimating its size. 
\end{enumerate}
    
\section{Applications}\label{sect-appli}
\subsection{The Star of intersection example}
In this section, we illustrate our method on an operation, the star of intersection, the state complexity of which is already known \cite{SSY07}. 
After having checked the upper bound, we show that this bound is tight and that the modifier of the monster $(\mathfrak{Star}\circ \mathfrak{Inter})\mathrm{\underline{M}}$ is a witness for this operation. 

Consider the $2$-modifier $\mathfrak{Star}\circ\mathfrak{Inter}=(\mathfrak Q,\iota,\mathfrak f,\mathfrak d)$. This modifier satisfies (using Table \ref{table operations} and Proposition \ref{prop-composition})
\begin{itemize}
\item $\mathfrak Q(Q_{1},Q_{2})=2^{Q_{1}\times Q_{2}}$, 
	 \item $\iota(\underline{Q},\underline{i},\underline{F})=\emptyset$, 
  	 \item $\mathfrak f(\underline{Q},\underline{i},\underline{F})=\{E\in 2^{Q_1 \times Q_2}\mid E\cap (F_1\times F_2)\neq\emptyset \}\cup\{\emptyset\}$,
\item For $\mathfrak d(\underline{i},\underline{F},{\underline{\delta^a})}$ we have
  \[\mathfrak d(\underline{i},\underline{F}, {\underline{\delta^a})}(E)=\left\{\begin{array}{ll}\overline{\{(\delta^a_1(i_1),\delta^a_2(i_2))\}}^{F_1\times F_2,{(i_1,i_2)}}&\text{ if }E=\emptyset\\
                                                                                      \overline{(\delta^a_1,\delta^a_2)(E)}^{F_1\times F_2,{(i_1,i_2)}}&\text{ otherwise.}
                                                                                      \end{array}\right.\]
\end{itemize}
	For the following of the paper, we will assume that $Q_1=\IntEnt{n_1}$,  $Q_2=\IntEnt{n_2}$ for some $n_1, n_2 \in \N$, and $i_1=i_2=0$.
	We can see  elements of $2^{\llbracket n_{1}\rrbracket \times \llbracket n_{2}\rrbracket }$ as boolean matrices of size $n_{1} \times n_{2}$. Such a matrix will be called a tableau.
	We denote by $T_{x,y}$ the value of the tableau $T$ at row $x$ and column $y$. The number of $1s'$ in a tableau $T$ will be denoted by $\#T$.\\

	As a consequence of Corollary \ref{prop-describable}, for any pair of regular languages $(L_{1},L_{2})$ over the same alphabet and any pair of complete deterministic automata $\underline{A}=(A_{1},A_{2})$ such that $L_{1}= L(A_{1})$ and $L_{2}= L(A_{2})$ we have $(L_{1}\cap L_{2})^{*}= L(({\mathfrak{Star}\circ\mathfrak{Inter}})\underline{A})$.



           Now, let $n_{1}$ and $n_{2}$ be two positive integers and let $ (F_{1},F_{2})$ be a subset of $ \llbracket n_{1}\rrbracket\times \llbracket n_{2}\rrbracket$. 
           An upper bound of the  state complexity of the composition of \emph{star} and \emph{inter} operations is obtained by maximizing the number of  states of $\mathrm{\widehat{M}}_{F_1,F_2}$ where $\mathrm{\widehat{M}}_{F_1,F_2}$ is the automaton deduced from $(\mathfrak{Star}\circ\mathfrak{Inter})\mathrm{\underline{M}}_{\underline{n},(F_1,F_2)}$ by removing tableaux having a $1$ in $(x,y)\in F_1\times F_2$  but no $1$ in $(0,0)$. Indeed, from Table \ref{table operations} the initial state of $\mathfrak{Inter}\mathrm{\underline{M}}_{\underline{n},(F_1,F_2)}$ is $(0,0)$ and its set of final states is $F_1\times F_2$. Furthermore, again from Table \ref{table operations}, if $A=(\Sigma,Q,q_0,F,\delta)$, the accessible states of $\mathfrak{Star}(A)$ are subsets $E\subset Q$ satisfying $E\cap F\neq \emptyset \Rightarrow q_0\in E$. Combining these two facts, the accessible states of  $(\mathfrak{Star}\circ\mathfrak{Inter})\mathrm{\underline{M}}_{\underline{n},(F_1,F_2)}$ are subsets $E\subset \IntEnt{n_1}\times \IntEnt{n_2}$ satisfying $E\cap (F_1\times F_2)\neq \emptyset \Rightarrow (0,0)\in E$.

  
              We first remark that  the initial state of $\mathfrak{Inter}\mathrm{\underline{M}}_{\underline{n},(0,0)}$ is the only final state. This implies that $L((\mathfrak{Star}\circ\mathfrak{Inter})\mathrm{\underline{M}}_{\underline{n},(0,0)})=L(\mathfrak{Inter}\mathrm{\underline{M}}_{\underline{n},(0,0)})^*=L(\mathfrak{Inter}\mathrm{\underline{M}}_{\underline{n},(0,0)})$, which in turn implies that $ \#_{\mathrm{Min}}(\mathrm{\widehat{M}}_{0,0})\leq \#_{\mathrm{Min}}(\mathfrak{Inter}\mathrm{\underline{M}}_{\underline{n},(0,0)}) \leq n_{1}n_{2}$.\\
          Notice also that if $\#(F_1\times F_2)=0$, then $\mathrm{\widehat{M}}_{F_1,F_2}$ recognizes the empty language, which trivially implies that $\#_{\mathrm{Min}}(\mathrm{M})\leq 1$.

\begin{lemma}\label{maximize}
The maximal number of states of $\mathrm{\widehat{M}_{F_1,F_2}}$ with $F_1\times F_2\not\in \{\{(0,0)\},\emptyset\}$ is when $\#(F_1\times F_2)=1$.
\end{lemma}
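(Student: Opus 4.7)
The plan is to count directly the number of accessible states of $\widehat{\mathrm{M}}_{F_1,F_2}$ as a function of $S := F_1\times F_2$, and then reduce the optimization to minimizing the size of a single auxiliary set. Setting $n := n_1 n_2$, I view every tableau as a subset $E \subset \IntEnt{n_1}\times\IntEnt{n_2}$; from the discussion preceding the lemma, the accessible states are exactly those $E$ satisfying the implication $E\cap S\neq\emptyset \Rightarrow (0,0)\in E$. I would partition these valid $E$ according to whether $(0,0)\in E$ or not.

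If $(0,0)\in E$ the implication is vacuously true, contributing $2^{n-1}$ subsets. If $(0,0)\notin E$, the implication forces $E\cap S=\emptyset$, so $E$ is any subset of $\IntEnt{n_1}\times\IntEnt{n_2}$ disjoint from $S\cup\{(0,0)\}$, contributing $2^{n-T}$ subsets, where $T := \#(S\cup\{(0,0)\})$. Hence
$$
\#\widehat{\mathrm{M}}_{F_1,F_2} \;=\; 2^{n-1} + 2^{n-T}.
$$
Maximizing this quantity amounts to minimizing $T$.

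Under the hypothesis $S \notin \{\emptyset,\{(0,0)\}\}$, the set $S\cup\{(0,0)\}$ always contains at least two elements: either $(0,0)\notin S$ and $S$ is nonempty, so $S\cup\{(0,0)\}$ contains $(0,0)$ together with any element of $S$; or $(0,0)\in S$, in which case $S\neq\{(0,0)\}$ forces $\#S\geq 2$, again giving $T\geq 2$. Therefore
$$
\#\widehat{\mathrm{M}}_{F_1,F_2} \;\leq\; 2^{n-1} + 2^{n-2}.
$$

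Finally I would exhibit a witness of equality with $\#S=1$: choosing any $(i,j)\neq (0,0)$ and taking $F_1=\{i\}$, $F_2=\{j\}$ produces $S=\{(i,j)\}$ with $(0,0)\notin S$, so $T=2$ and the bound is attained. Since $S$ must be a Cartesian product, this is the only possibility with $\#S=1$ compatible with the hypothesis, and it realizes the maximum. The main (and essentially only) delicate point is checking that excluding $S=\{(0,0)\}$ really is needed to force $T\geq 2$: this is the unique configuration for which $T=1$, and a short direct verification shows that this excluded case would give $\#\widehat{\mathrm{M}}=2^n$, strictly larger than the $|S|=1$ value; this is precisely why the degenerate case $S=\{(0,0)\}$ has already been handled separately in the preceding paragraph of the text.
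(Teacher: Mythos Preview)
Your proof is correct and follows essentially the same approach as the paper's: both compute $\#\widehat{\mathrm{M}}_{F_1,F_2}$ by direct counting and then optimize. Your parametrization via $T=\#(S\cup\{(0,0)\})$ is a mild streamlining that unifies the paper's two cases $(0,0)\in F_1\times F_2$ and $(0,0)\notin F_1\times F_2$ into a single formula, but the underlying argument is identical.
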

	\begin{proof}

        $\begin{array}{llll}
          \#\mathrm{\widehat{M}}_{F_1,F_2}&=&\#2^{\llbracket n_{1}\rrbracket \times \llbracket n_{2}\rrbracket}-\#\{T\in 2^{\llbracket n_{1}\rrbracket \times \llbracket n_{2}\rrbracket } \mid (\exists (x,y)\in F_{1}\times F_{2} \mbox{ s.t. } T_{x,y}=1) \land T_{0,0}=0\}\\
          &=&2^{n_1n_2}-(\#\{T\in 2^{\llbracket n_{1}\rrbracket \times \llbracket n_{2}\rrbracket } \mid T_{0,0}=0\}\\
          &&\ \ \ \ \ \ \ \ \ \ \ \ \ -\ \#\{T\in 2^{\llbracket n_{1}\rrbracket \times \llbracket n_{2}\rrbracket }\mid \forall (x,y)\in F_{1}\times F_{2}, T_{x,y}=0 \land T_{0,0}=0\})\\
          &=&\left\{\begin{array}{ll}2^{n_1n_2}-(2^{n_1n_2-1}-2^{n_1n_2-\#F_1\#F_2-1})&\text{ if }(0,0)\not\in F_1\times F_2\\
          2^{n_1n_2}-(2^{n_1n_2-1}-2^{n_1n_2-\#F_1\#F_2})&\text{ otherwise}
          \end{array}\right.
         \end{array}$
 
        In conclusion, the maximal number of states of  $\#\mathrm{\widehat{M}}_{F_1,F_2}$ with $F_1\times F_2\not\in \{\{(0,0)\},\emptyset\}$ is reached when $\#F_1\times \#F_2=1$ and is 
     $\frac{3}{4}2^{n_1n_2}$.        
        \end{proof}
        
	\begin{corollary}\label{corollary-upperbound}
            $\#_{\mathrm{Min}}((\mathfrak{Star}\circ\mathfrak{Inter})\mathrm{\underline{M}}_{\underline{n},(F_1,F_2)})\leq \frac{3}{4}2^{n_{1}n_{2}} $
          
        \end{corollary}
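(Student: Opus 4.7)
The plan is to assemble the corollary by combining the three regimes for $F_1\times F_2$ already identified in the surrounding text, namely $F_1\times F_2=\emptyset$, $F_1\times F_2=\{(0,0)\}$, and the generic case $F_1\times F_2\notin\{\emptyset,\{(0,0)\}\}$ which is controlled by Lemma \ref{maximize}. The synthesis step is routine because the preceding discussion and the lemma already supply the hard numerics; the corollary is a packaging result.

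First I would argue the reduction $\#_{\mathrm{Min}}((\mathfrak{Star}\circ\mathfrak{Inter})\mathrm{\underline{M}}_{\underline{n},(F_1,F_2)})\leq \#\widehat{M}_{F_1,F_2}$. By the description of the accessible states of $(\mathfrak{Star}\circ\mathfrak{Inter})\mathrm{\underline{M}}_{\underline{n},(F_1,F_2)}$ given just before Lemma \ref{maximize}, every reachable tableau $E$ satisfies $E\cap (F_1\times F_2)\neq\emptyset\Rightarrow (0,0)\in E$, which is exactly the condition that $\widehat{M}_{F_1,F_2}$ has been obtained by removing all tableaux violating. Hence the accessible part of $(\mathfrak{Star}\circ\mathfrak{Inter})\mathrm{\underline{M}}_{\underline{n},(F_1,F_2)}$ is contained in $\widehat{M}_{F_1,F_2}$, and minimization only decreases state count, giving the inequality.

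Next I would split into the three cases. If $F_1\times F_2=\emptyset$, the observation preceding Lemma \ref{maximize} yields $\#_{\mathrm{Min}}\leq 1$. If $F_1\times F_2=\{(0,0)\}$, the same preceding observation yields $\#_{\mathrm{Min}}\leq n_1 n_2$. In the remaining case, Lemma \ref{maximize} gives $\#\widehat{M}_{F_1,F_2}\leq \tfrac{3}{4}2^{n_1 n_2}$, with equality attained when $\#F_1\cdot\#F_2=1$ (and the single element is not $(0,0)$). Combined with the reduction above, each of the three cases yields $\#_{\mathrm{Min}}((\mathfrak{Star}\circ\mathfrak{Inter})\mathrm{\underline{M}}_{\underline{n},(F_1,F_2)})\leq \tfrac{3}{4}2^{n_1 n_2}$: the first two bounds, $1$ and $n_1 n_2$, are trivially dominated by $\tfrac{3}{4}2^{n_1 n_2}$ for all positive $n_1,n_2$ (checking the small cases $n_1 n_2\in\{1,2\}$ directly and using the exponential growth for $n_1 n_2\geq 3$).

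The main conceptual point, and the only step that is not immediate, is the inequality $\#_{\mathrm{Min}}\leq \#\widehat{M}_{F_1,F_2}$. There is no genuine obstacle, but one must be careful to invoke the correct characterization of the reachable subsets produced by $\mathfrak{Star}\circ\mathfrak{Inter}$ rather than only those produced by $\mathfrak{Inter}$, so that the claim ``all unreachable tableaux have been discarded in $\widehat{M}_{F_1,F_2}$'' is actually justified. Once this is in place, the three-case analysis closes the proof with essentially no computation.
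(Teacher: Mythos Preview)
Your proposal is correct and follows essentially the same approach as the paper. The paper's proof is a one-liner that invokes Lemma~\ref{maximize} and computes $2^{n_1n_2}-(2^{n_1n_2-1}-2^{n_1n_2-2})=\tfrac{3}{4}2^{n_1n_2}$, leaving implicit both the reduction $\#_{\mathrm{Min}}\leq\#\widehat{M}_{F_1,F_2}$ and the two degenerate cases $F_1\times F_2\in\{\emptyset,\{(0,0)\}\}$, which you make explicit from the preceding discussion.
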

        \begin{proof}
        From Lemma \ref{maximize}, we maximize the number of tableaux when $\#F_1\times \#F_2=1$. So the upper bound is  $2^{n_1n_2}-(2^{n_1n_2-1}-2^{n_1n_2-1-1})= \frac{3}{4}2^{n_1n_2}$.
        \end{proof}
        
        Now we show that this upper bound is the state complexity of  the combination of the star and the intersection operations.
        
        Let $F_1,F_2$ be $\{n_1-1\},\{n_2-1\}$ and let  $\mathrm{\widehat{M}}=\mathrm{\widehat{M}}_{F_1,F_2}$.

        \begin{lemma}\label{proof-lemma-distinct-Star-inter}
          All states of $\mathrm{\widehat{M}}$ are accessible.
        \end{lemma}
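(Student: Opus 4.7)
The plan is to prove accessibility by induction on $\#T$, producing for each valid tableau $T$ an explicit sequence of letters driving $\emptyset$ to $T$ in $\widehat M$.

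For the base cases, $T=\emptyset$ is the initial state. When $\#T=1$, validity excludes $T=\{(n_1-1,n_2-1)\}$, so $T=\{(a,b)\}$ with $(a,b)\ne(n_1-1,n_2-1)$, reached from $\emptyset$ by any letter $(g_1,g_2)$ with $(g_1(0),g_2(0))=(a,b)$. The two-element tableau $T=\{(0,0),(n_1-1,n_2-1)\}$ is reached from $\emptyset$ by a letter with $(g_1(0),g_2(0))=(n_1-1,n_2-1)$, which triggers the $(0,0)$-trick described in Table~\ref{table operations}.

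For the inductive step on a valid $T$ with $\#T\ge 2$, I look for a valid predecessor $T'$ and a letter $(g_1,g_2)$ whose transition produces $T$. Two mechanisms are available: either the image of $T'$ under $(g_1,g_2)$ equals $T$ directly, or the image equals $T\setminus\{(0,0)\}$ and contains $(n_1-1,n_2-1)$, so the $(0,0)$-trick restores the missing point. I split on whether $(n_1-1,n_2-1)\in T$.

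\emph{Case A} ($(n_1-1,n_2-1)\in T$, hence $(0,0)\in T$). For $\#T\ge 3$ pick $p\in T\setminus\{(0,0),(n_1-1,n_2-1)\}$ and set $T'=T\setminus\{p\}$, which is valid and smaller, hence accessible by induction. I then design $(g_1,g_2)$ sending $(0,0)\mapsto p$ and fixing the remaining elements of $T'$; the image is $T\setminus\{(0,0)\}$, contains $(n_1-1,n_2-1)$, and the trick restores $(0,0)$. \emph{Case B} ($(n_1-1,n_2-1)\notin T$). The trick cannot fire at the final step, so this step must be a bijective image. I first reach, via Case~A, the valid tableau $\widetilde T = T\cup\{(n_1-1,n_2-1)\}$ (if $(0,0)\in T$) or $\widetilde T = T\cup\{(0,0),(n_1-1,n_2-1)\}$ (otherwise), then apply a letter to $\widetilde T$ collapsing $(n_1-1,n_2-1)$ (and $(0,0)$, in the second subcase) onto a suitable element of $T$, yielding $T$. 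The induction is non-circular: Case~A is established first for all sizes by induction on $\#T$, and Case~B only appeals to Case~A tableaux of size $\#T+1$ or $\#T+2$.

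The main obstacle is guaranteeing that the designed letter is a well-defined function on $\IntEnt{n_1}$ and $\IntEnt{n_2}$: when some $q\in T\setminus\{(0,0),(n_1-1,n_2-1)\}$ shares a coordinate with $0$ or $n_1-1$ (respectively $n_2-1$), the naive assignment of $g_1(0)$ or $g_1(n_1-1)$ (resp. $g_2$) can clash with the requirement that $q$ be fixed. I expect this to be resolved by choosing $p$ in Case~A (or the collapse targets in Case~B) to be compatible with the forced coordinate assignments whenever possible; in the exceptional situations where no single-step reduction succeeds, one inserts an intermediate step through a larger auxiliary tableau containing an extra element whose coordinates are drawn from $\IntEnt{n_1}\setminus(\pi_1(T)\cup\{0,n_1-1\})$ or $\IntEnt{n_2}\setminus(\pi_2(T)\cup\{0,n_2-1\})$, exploiting the monster alphabet's ability to realize every product function, and then collapses back to $T$.
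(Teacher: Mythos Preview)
Your Case~A construction has a real gap. You ask for a letter $(g_1,g_2)$ sending $(0,0)\mapsto p$ while fixing every element of $T'\setminus\{(0,0)\}=T\setminus\{p,(0,0)\}$. But if $T$ contains some $(0,b)$ with $b\neq 0$ and $(0,b)\neq p$, then fixing it forces $g_1(0)=0$, hence $p_1=0$; symmetrically an element $(a,0)$ with $a\neq 0$, $(a,0)\neq p$ forces $p_2=0$. When $T$ has such elements in both row~$0$ and column~$0$, the only compatible target is $p=(0,0)$, which you exclude. Your fallback of inserting an auxiliary element with a fresh row or column coordinate is unavailable precisely when $\pi_1(T)=\IntEnt{n_1}$ and $\pi_2(T)=\IntEnt{n_2}$. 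For $n_1=n_2=3$ take $T=\{(0,0),(0,1),(1,0),(2,2)\}$: the element $(0,1)$ forces $p_1=0$, $(1,0)$ forces $p_2=0$, so no admissible $p$ exists, and both projections are onto so no fresh coordinate is available either. The same coordinate clash recurs in Case~B when you try to collapse $(0,0)$ and $(n_1{-}1,n_2{-}1)$ onto points of $T$ while fixing $T$.

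The paper avoids this entirely by using transpositions rather than ``fix all but one'' maps. In the case $(0,0),(n_1{-}1,n_2{-}1)\in T'$ it deletes $(0,0)$ to form $T''$, sets $T=\bigl((0,n_1{-}1),(0,n_2{-}1)\bigr)(T'')$, and reads the same letter back: since it is an involutive bijection the image of $T$ is $T''$, and the closure rule restores $(0,0)$. Because the letter is a pair of permutations there is no well-definedness issue at all, and $\#T<\#T'$. The other cases likewise rely on global transpositions or single contractions, organised by a refinement of your order that also tracks whether $(n_1{-}1,n_2{-}1)$ and $(0,0)$ are present. Reworking your argument along these lines---replacing pointwise fixing by coordinate-wise permutations---is what is needed to close the gap.
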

        \begin{proof}
%
          Let $T$ be a state of $\mathrm{\widehat{M}}$.
          Let us define an order $<$ on tableaux as
          $T < T'$ if and only if\\
\smallskip
      \hspace*{0.5cm}   $\mathbf{(1)}$ $\#(T) < \#(T')$ or\\
    \smallskip
      \hspace*{0.5cm}   $\mathbf{(2)}$ $(\#(T) = \#(T')$ and $T_{n_1-1,n_2-1}=1$ and $T'_{n_1-1,n_2-1}=0)$ or\\
       \smallskip
       \hspace*{0.5cm}   $\mathbf{(3)}$ $(\#(T) = \#(T')$ and $T_{n_1-1,n_2-1}=T'_{n_1-1,n_2-1}$ and $T_{0,0}=1$ and $T'_{0,0}=0)$.
          
          Let us prove the assertion by induction on non-empty tableaux of $\mathrm{\widehat{M}}$ for the partial order $<$ (the empty tableau is the initial state of $\mathrm{\widehat{M}}$, and so it is accessible):\\
          The only minimal tableau for non-empty tableaux of $\mathrm{\widehat{M}}$ and the order $<$ is the tableau with only one $1$ at $(0,0)$. This is accessible from the initial state $\emptyset$ by reading the letter $(\mathrm{Id},\mathrm{Id})$. Let us notice that each letter is a pair of functions of $\IntEnt{n_1}^{\IntEnt{n_1}}\times \IntEnt{n_2}^{\IntEnt{n_2}}$.\\
          Now let us take a tableau $T'$, and find a tableau $T$ such that $T<T'$, and T' is accessible from $T$. We distinguish the cases below, according to some properties of $T'$. For each case, we define a tableau $T$ and a letter $(f,g)$. For all cases, except the last one, we easily check that \\
          \smallskip
      \hspace*{1cm}
          \textbf{(1)} $T_{0,0}=1$ (which implies that $T$ is a state of $\mathrm{\widehat{M}}$),\\
          \smallskip
      \hspace*{1cm}
           \textbf{(2)} $\delta^{(f,g)}(T)=(f,g)(T)=T'$ (where $(f,g)(T)=\{(f(i),g(j))\mid (i,j)\in T\}$), and\\
           \smallskip
      \hspace*{1cm}
 \textbf{(3)} $T < T'$.

          \begin{itemize}
          \item $T'_{n_1-1,n_2-1}=0$.
          \smallskip
            \begin{itemize}
          \item $T'_{0,0}=0$. Let $(i,j)$ be the index of a $1$ in $T'$. Define $(f,g)$ as $((0,i),(0,j))$ where $(0,i)$ and $(0,j)$ denote transpositions, and $T=(f,g)(T')$.
          \item $T'_{0,0}=1$.
            \begin{itemize}
            \item There exists $(i,j) \in \{1,2,...,n_{1}-1\} \times \{1,2,..., n_{2}-1\}$ such that $T'_{i,j}=1$.
             Define $(f,g)$ as $((n_1-1,i),(n_2-1,j))$, then $T=(f,g)(T')$.
            \item For all $(i,j) \in \{1,2,...,n_{1}-1\} \times \{1,2,..., n_{2}-1\}$, $T'_{i,j}=0$, $T'_{0,n_2-1}=1$ and $T'_{n_1-1,0}=1$.
              In that case, define $(f,g)$ as $(Id,(n_2-1,0))$, and $T$ as $(f,g)(T')$.
            \item For all $(i,j) \in \{1,2,...,n_{1}-1\} \times \{1,2,..., n_{2}-1\}$, $T'_{i,j}=0$, $T'_{0,n_2-1}=1$ and $T'_{n_1-1,0}=0$.
             Define $(f,g)$ as $\left(\left(n_1-1\atop 0\right),Id\right)$ .  Then $T$ is defined as
              \[ \left\{\begin{array}{lll} T_{0,n_2-1}=0 \\
              T_{n_1-1,n_2-1}=1\\
              T_{i,j}=T'_{i,j}\mbox{ if }(i,j)\notin {(0,n_2-1),(n_1-1,n_2-1)}
              \end{array}\right. \]
            \item For all $(i,j) \in \{1,2,...,n_{1}-1\} \times \{1,2,..., n_{2}-1\}$, $T'_{i,j}=0$, $T'_{0,n_2-1}=0$ and $T'_{n_1-1,0}=1$.
              This case is symmetrical to the case above.
            \item For all $(i,j) \in \{1,2,...,n_{1}-1\} \times \{1,2,..., n_{2}-1\}$, $T'_{i,j}=0$, $T'_{0,n_2-1}=0$ and $T'_{n_1-1,0}=0$.
             Let $(i,j)\neq (0,0)$ be a $1$ in $T'$. Define $(f,g)=\left(\left(n_1-1\atop i\right), \left(n_2-1\atop j\right)\right)$, and define $T$ as follows 
              \[ \left\{\begin{array}{lll} T_{i,j}=0 \\
              T_{n_1-1,n_2-1}=1\\
              T_{i',j'}=T'_{i',j'}\mbox{ if }(i',j')\notin {(i,j),(n_1-1,n_2-1)}
              \end{array}\right. \]
            \end{itemize}
            \end{itemize}
          \item $T'_{0,0}=1$ and $T'_{n_1-1,n_2-1}=1$.
            Let $(f,g)=((n_1-1,0),(n_2-1,0))$. Let $T''$ be the matrix obtained from $T'$ by replacing the $1$ in $(0,0)$ by a $0$. Let $T= (f,g)(T'')$. As $(f,g)$ is a bijection over $\llbracket n_{1}\rrbracket \times \llbracket n_{2}\rrbracket$, we have $T_{0,0}=((f,g)(T''))_{0,0}=T''_{n_1-1,n_2-1}=1$, which means that $T$ is a state of $\mathrm{\widehat{M}}$, and $(f,g)(T)=(f,g)(f,g)(T'')=T''$. As $T''_{n_1-1,n_2-1}=1$, we have $\delta^{(f,g)}(T)= T' $ in $\mathrm{\widehat{M}}$. Furthermore, $\#T < \#T'$ implies that $T < T'$.
            \end{itemize}
        \end{proof}

        
        \begin{lemma}\label{proof-lemma-dist-Star-inter}
          All states of $\mathrm{\widehat{M}}$ are separable.
        \end{lemma}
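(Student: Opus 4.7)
The plan is, for each pair of distinct states $T \neq T'$ of $\mathrm{\widehat M}$, to exhibit a separating word of length at most one. The key observation is that a state $E$ is final if and only if $E=\emptyset$ or $(n_1-1, n_2-1) \in E$, and for a non-empty $E$ and a letter $(f,g)$, $\delta^{(f,g)}(E)$ equals $(f,g)(E) \cup \{(0,0)\}$ when $(n_1-1, n_2-1) \in (f,g)(E)$ and $(f,g)(E)$ otherwise. Since $(0,0) \neq (n_1-1, n_2-1)$ whenever $(n_1,n_2) \neq (1,1)$ (the degenerate case being trivial), the finality of $\delta^{(f,g)}(E)$ is determined exactly by whether $(f,g)(E)$ contains $(n_1-1, n_2-1)$; likewise, from $E=\emptyset$ the result is final iff $(f(0), g(0)) = (n_1-1, n_2-1)$.

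The main tool is a family of ``collapsing'' letters: for each $(x,y) \in \IntEnt{n_1} \times \IntEnt{n_2}$, let $\ell_{x,y} = (f_{x,y}, g_{x,y})$ where $f_{x,y}(x) = n_1-1$ and $f_{x,y}(i) = 0$ for $i \neq x$, and analogously for $g_{x,y}$. For any non-empty tableau $E$, one has $(n_1-1, n_2-1) \in \ell_{x,y}(E)$ if and only if $(x,y) \in E$.

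The proof then splits into three cases. First, if exactly one of $T, T'$ is final, the empty word separates them. Second, if both are non-final, they are both non-empty with neither containing $(n_1-1, n_2-1)$; picking any $(x,y) \in T \setminus T'$ (without loss of generality) and applying $\ell_{x,y}$ yields $\delta(T)$ containing $(n_1-1, n_2-1)$ and thus final, while $\delta(T')$ is non-empty, avoids $(n_1-1, n_2-1)$, and is non-final. Third, if both are final, either one of them equals $\emptyset$ -- in which case the identity letter sends $\emptyset$ to $\{(0,0)\}$ (non-final) and the other state to itself (still final, as it contains $(n_1-1, n_2-1)$) -- or both are non-empty and both contain $(n_1-1, n_2-1)$, so their symmetric difference contains some $(x,y) \neq (n_1-1, n_2-1)$ and $\ell_{x,y}$ again separates them. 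The only real subtlety is the $\mathfrak{Star}$ step's silent addition of $(0,0)$; as noted above, this addition never introduces $(n_1-1, n_2-1)$ and therefore does not interfere with any of the above distinctions.
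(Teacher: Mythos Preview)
Your argument is correct and rests on the same device the paper uses: the ``collapsing'' letter $\ell_{x,y}=(f_{x,y},g_{x,y})$, under which a non-empty tableau lands in a final state exactly when it contains $(x,y)$. The paper dispatches the lemma in one stroke by picking any $(i,j)$ in the symmetric difference and applying this letter, whereas you first split on finality and treat the $\emptyset$ state separately with the identity letter; your extra care is in fact warranted, since the paper's uniform claim $\delta^{(f,g)}(T')_{n_1-1,n_2-1}=T'_{i,j}$ tacitly assumes $T'\neq\emptyset$ (it fails when $T'=\emptyset$ and $(i,j)=(0,0)$, a case you cover via the empty word or the identity letter).
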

        \begin{proof}
          Let $T$ and $T'$ be two different states of $\mathrm{\widehat{M}}$. There exists $(i,j)\in \IntEnt{n_1}\times\IntEnt{n_2}$ such that $T_{i,j}\neq T'_{i,j}$. Suppose, for example, that $T_{i,j}=1$ and $T'_{i,j}=0$. Let $(f,g)\in \IntEnt{n_1}^{\IntEnt{n_1}}\times \IntEnt{n_2}^{\IntEnt{n_2}}$ such that~:
          \[f(x)=\left\{\begin{array}{ll} n_1-1 &\text{ if } x=i \\
          0 &\text{ otherwise}
          \end{array}\right.
          \mbox{ and }
          g(x)=\left\{\begin{array}{ll} n_2-1 &\text{ if } x=j \\
          0 &\text{ otherwise}
          \end{array}\right.\]
          We have  $ \delta^{(f,g)}(T)_{n_1-1,n_2-1}=T_{i,j}=1$, and $ \delta^{(f,g)}(T')_{n_1-1,n_2-1}=T'_{i,j}=0$. Therefore, $T$ and $T'$ are separable in $\mathrm{\widehat{M}}$.         
        \end{proof}
        
        Theorem \ref{theorem-monster} and the previous lemmas give us :
        \begin{theorem}\label{theo-Star-inter}
          The state complexity of the star of intersection is $\frac{3}{4}2^{n_1n_2}$.
        \end{theorem}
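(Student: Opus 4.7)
My plan is to combine Theorem \ref{theorem-monster} with the results of this subsection so that the proof is essentially assembling pieces already established. Since $\mathrm{Star}\circ\mathrm{Inter}$ is $(\mathfrak{Star}\circ\mathfrak{Inter})$-describable (by Corollary \ref{prop-describable} applied to Tables \ref{table operations}), Theorem \ref{theorem-monster} tells me that
\[
\sc_{\mathrm{Star}\circ\mathrm{Inter}}(n_1,n_2) \;=\; \max\bigl\{\#_{\mathrm{Min}}\bigl((\mathfrak{Star}\circ\mathfrak{Inter})\underline{\mathrm{M}}_{\underline{n},\underline{F}}\bigr) \mid \underline{F}\subset\IntEnt{n_1}\times\IntEnt{n_2}\bigr\}.
\]
So the task reduces to proving that this maximum is exactly $\frac{3}{4}2^{n_1n_2}$.

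For the upper bound, I would first note that the accessible states of $(\mathfrak{Star}\circ\mathfrak{Inter})\underline{\mathrm{M}}_{\underline{n},(F_1,F_2)}$ are exactly those in $\mathrm{\widehat{M}}_{F_1,F_2}$ (this is argued just before Lemma \ref{maximize}), so $\#_{\mathrm{Min}}\bigl((\mathfrak{Star}\circ\mathfrak{Inter})\underline{\mathrm{M}}_{\underline{n},(F_1,F_2)}\bigr)\leq \#\mathrm{\widehat{M}}_{F_1,F_2}$. Corollary \ref{corollary-upperbound} then gives the bound $\frac{3}{4}2^{n_1n_2}$ whenever $F_1\times F_2\notin\{\{(0,0)\},\emptyset\}$, and the degenerate cases handled before Lemma \ref{maximize} give much smaller values ($\leq n_1 n_2$ and $\leq 1$ respectively), so they cannot attain the maximum.

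For the lower bound, I would choose the specific witness $F_1=\{n_1-1\}$, $F_2=\{n_2-1\}$, since this satisfies $\#F_1\cdot\#F_2 = 1$ with $(0,0)\notin F_1\times F_2$, which is exactly the configuration saturating Lemma \ref{maximize}. By that lemma, the corresponding $\mathrm{\widehat{M}}=\mathrm{\widehat{M}}_{F_1,F_2}$ has exactly $\frac{3}{4}2^{n_1n_2}$ states. Lemma \ref{proof-lemma-distinct-Star-inter} guarantees that all these states are accessible in $(\mathfrak{Star}\circ\mathfrak{Inter})\underline{\mathrm{M}}_{\underline{n},(F_1,F_2)}$, and Lemma \ref{proof-lemma-dist-Star-inter} guarantees that they are pairwise separable. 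Consequently $\mathrm{\widehat{M}}$ \emph{is} the minimal automaton, so $\#_{\mathrm{Min}}\bigl((\mathfrak{Star}\circ\mathfrak{Inter})\underline{\mathrm{M}}_{\underline{n},(F_1,F_2)}\bigr) = \frac{3}{4}2^{n_1n_2}$, matching the upper bound.

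The hard technical work is already done in the preceding lemmas; what remains in this theorem is purely the bookkeeping of invoking Theorem \ref{theorem-monster}, Corollary \ref{corollary-upperbound}, and the two lemmas in sequence. The only subtle point I would be careful about is justifying that restricting to $\mathrm{\widehat{M}}_{F_1,F_2}$ (rather than the full image $(\mathfrak{Star}\circ\mathfrak{Inter})\underline{\mathrm{M}}_{\underline{n},(F_1,F_2)}$) is legitimate on both sides: for the upper bound one uses that only accessible states contribute to $\#_{\mathrm{Min}}$, and for the lower bound one uses that accessibility combined with pairwise inequivalence forces $\#_{\mathrm{Min}}$ to equal the number of accessible states, which is $\#\mathrm{\widehat{M}}$.
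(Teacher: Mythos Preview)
Your proposal is correct and follows exactly the approach of the paper: the paper's own proof is a single sentence invoking Theorem~\ref{theorem-monster} together with the preceding lemmas (Corollary~\ref{corollary-upperbound}, Lemma~\ref{proof-lemma-distinct-Star-inter}, Lemma~\ref{proof-lemma-dist-Star-inter}), and you have simply spelled out how those pieces fit together.
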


\subsection{The square root example}
 In this section, we are interested in the square root of a language $L$  defined by $\sqrt L=\{x\mid xx\in L\}$. By a straightforward computation, we show easily that $sc_{\sqrt\ }(2)=2$ and thus we investigate only  the case $n>2$. Maslov \cite{Mas70} showed that the square root preserves regularity and he gave a construction that can be summarized in terms of modifier by ${\mathfrak{SRoot}}$ (see Table \ref{table operations}).
We first remark that this construction gives us an upper bound of $n^{n}$ for the state complexity of square root. We also notice that in this case  $\mathfrak d$ is a morphism in the sense  that $\mathfrak d(0,F,\delta^a)\circ \mathfrak d(0,F,\delta^b)=\mathfrak d(0,F,\delta^a\circ \delta^b)$. Therefore, the application $\phi \rightarrow \mathfrak d(0,F,\phi)$ is a morphism of semigroups from $\llbracket n\rrbracket^{\llbracket n\rrbracket}$ to $\mathfrak d(0,F,\llbracket n\rrbracket^{\llbracket n\rrbracket})$. As $\llbracket n\rrbracket^{\llbracket n\rrbracket}$ can be generated by 3 elements, $\mathfrak d(0,F,\llbracket n\rrbracket^{\llbracket n\rrbracket})$ can be too. Therefore, there exists a witness with at most $3$ letters. These two properties have been already noticed by Maslov in \cite{Mas70}.

Let us consider the automaton ${\mathfrak{SRoot}}(\mathrm M_{n,F})$. 
We notice that all the states in ${\mathfrak{SRoot}}(\mathrm M_{n,F})$ are accessible. Indeed, the state labeled by the function $g$ is reached from $Id$ by reading the letter $g$.

For the separability, we consider a state $g_{a,b}$ defined as follows.
Let  $a\neq b\in\llbracket n\rrbracket$ and 
$g_{a,b}(x)=a$ if $x\in F$ and $g_{a,b}(x)=b$ otherwise. 

\begin{lemma}
For each pair $a,b\in \llbracket n\rrbracket$ such that  $a\neq b$, the two states $g_{a,b}$ and $g_{b,a}$ are not separable in ${\mathfrak{SRoot}}(\mathrm M_{n,F})$.
\end{lemma}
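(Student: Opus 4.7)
The plan is to exploit the fact that in $\mathfrak{SRoot}(\mathrm{M}_{n,F})$, reading a letter---which is itself a function $h\in\llbracket n\rrbracket^{\llbracket n\rrbracket}$---from a state $g$ leads to $h\circ g$, so the forward orbit of $g_{a,b}$ stays inside the family $\{g_{c,d}\mid c,d\in\llbracket n\rrbracket\}$. First I would verify the identity $h\circ g_{c,d}=g_{h(c),h(d)}$ by a one-line check: both sides map $x$ to $h(c)$ when $x\in F$ and to $h(d)$ otherwise. Iterating, any word $w$ has an associated composite function $h_w$, and reading $w$ leads from $g_{a,b}$ to $g_{h_w(a),h_w(b)}$ and, in parallel, from $g_{b,a}$ to $g_{h_w(b),h_w(a)}$.

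It therefore suffices to show that the acceptance predicate $g\mapsto [g^2(0)\in F]$ takes the same value on $g_{c,d}$ and $g_{d,c}$ for every pair $c,d\in\llbracket n\rrbracket$. I would dispatch this by a case split on whether $0\in F$. If $0\in F$, then $g_{c,d}(0)=c$ and $g_{c,d}(c)$ equals $c$ when $c\in F$ and equals $d$ otherwise, so $g_{c,d}^2(0)\in F$ iff $c\in F$ or $d\in F$---symmetric in $(c,d)$. If $0\notin F$, then $g_{c,d}(0)=d$ and $g_{c,d}(d)$ equals $c$ when $d\in F$ and equals $d$ otherwise, so $g_{c,d}^2(0)\in F$ iff both $c\in F$ and $d\in F$---again symmetric. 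Specialising at $(c,d)=(h_w(a),h_w(b))$ yields the desired equivalence for every $w$, proving non-separability.

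I do not expect any serious obstacle. The only point to watch is the direction of composition in $\mathfrak d$, which by Table~\ref{table operations} is post-composition $g\mapsto \delta_1^a\circ g$; this is precisely what keeps the family $\{g_{c,d}\}$ invariant under transitions and makes the two forward orbits perfect swaps of one another, so that the symmetry of the finality predicate in its two parameters closes the argument.
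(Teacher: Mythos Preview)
Your proof is correct and follows essentially the same strategy as the paper: both reduce the question to checking that $h\circ g_{a,b}$ and $h\circ g_{b,a}$ have the same finality for every function $h$, and both dispatch this via a case split on whether $0\in F$. Your presentation is slightly cleaner in that you explicitly record the identity $h\circ g_{c,d}=g_{h(c),h(d)}$ and then argue that the finality predicate on $g_{c,d}$ is symmetric in $(c,d)$, whereas the paper carries out the equivalent computation of $(h\circ g_{a,b})^2(0)$ and $(h\circ g_{b,a})^2(0)$ directly in each subcase.
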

\begin{proof}
Let us prove that for any $h$, the functions
$h\circ g_{a,b}$ and $h\circ g_{b,a}$ are both final or both non final.
In fact we have only two values of $h$ to investigate: $h(a)$ and $h(b)$. If $h(a),h(b)\in F$ or  $h(a),h(b)\not\in F$ then the two functions $h\circ g_{a,b}$ and $h\circ g_{b,a}$ are obviously both final or both non final.
Without loss of generality, suppose that $h(a)\in F$ (and so $h(b)\not\in F$). We have to examine two possibilities:
\begin{itemize}
\item Either $0\in F$, in this case $h(g_{a,b}(0))=h(a)\in F$. Then $g_{a,b}(h(g_{a,b}(0)))=a$ and $h(g_{a,b}(h(g_{a,b}(0))))=h(a)\in F$.
But $h(g_{b,a}(0))=h(b)\not\in F$. Hence,  $g_{b,a}(h(g_{b,a}(0)))=a$, so $h(g_{b,a}(h(g_{b,a}(0))))\in F$. This implies that the two states are final.
\item Or $0\not\in F$, in this case $h(g_{a,b}(0))=h(b)\not\in F$. Then  $g_{a,b}(h(g_{a,b}(0)))=b$ and $h(g_{a,b}(h(g_{a,b}(0))))=h(b)\not\in F$. But we also have $h(g_{b,a}(0))=h(a)\in F$. Hence,  $g_{b,a}(h(g_{b,a}(0)))=b$, so $h(g_{b,a}(h(g_{b,a}(0))))$ $\not\in F$. This implies that the two states are not final.
\end{itemize}
We deduce that the two states are not separable. Notice that the number of transformations $g_{a,b}$ is exactly $2\binom{n}{2}$.
\end{proof}

\begin{corollary}\label{cor-sqrt}
 $$\mathrm{sc}_{\sqrt\ }(n)\leq n^n-\binom n2$$
 \end{corollary}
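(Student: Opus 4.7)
The plan is to combine Theorem \ref{theorem-monster} with the preceding lemma. By Theorem \ref{theorem-monster}, $\sc_{\sqrt\ }(n) = \max_{F \subset \llbracket n \rrbracket} \#_{\mathrm{Min}}({\mathfrak{SRoot}}\mathrm{M}_{n,F})$, so it suffices to bound this quantity by $n^n - \binom{n}{2}$ uniformly in $F$.

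First I would fix $F \subset \llbracket n \rrbracket$. Recall from Table \ref{table operations} that the state set of ${\mathfrak{SRoot}}\mathrm{M}_{n,F}$ is $\llbracket n \rrbracket^{\llbracket n \rrbracket}$, of size $n^n$, and by the remark preceding the lemma every state is accessible (the state $g$ being reached from the initial state $\mathds{1}$ by reading the letter $g$). The task therefore reduces to counting pairs of $\sim$-equivalent states and exhibiting at least $\binom{n}{2}$ collapses.

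Next I would apply the previous lemma: for every pair $a \neq b$ in $\llbracket n \rrbracket$, the states $g_{a,b}$ and $g_{b,a}$ are $\sim$-equivalent. When $F$ is a proper nonempty subset of $\llbracket n \rrbracket$, both $F$ and its complement are nonempty, hence $g_{a,b}$ is determined by the ordered pair $(a,b)$ and the $2\binom{n}{2}$ functions $\{g_{a,b} \mid a\neq b\}$ are pairwise distinct. The symmetric relation $g_{a,b}\sim g_{b,a}$ partitions them into exactly $\binom{n}{2}$ classes of size two, so minimization collapses each class to a single state, removing at least $\binom{n}{2}$ states from the count. Hence $\#_{\mathrm{Min}}({\mathfrak{SRoot}}\mathrm{M}_{n,F}) \leq n^n - \binom{n}{2}$.

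Finally I would dispose of the two degenerate cases. If $F = \emptyset$ then no $g$ satisfies $g^2(0) \in F$, so the recognized language is empty and the minimal automaton has a single state. If $F = \llbracket n \rrbracket$ then every state is final, the recognized language is $\Sigma^*$, and again the minimal automaton has one state. Both values are dominated by $n^n - \binom{n}{2}$ for $n > 2$, so the bound holds for every $F$ and the corollary follows. There is no real obstacle beyond the routine check that the $g_{a,b}$ are pairwise distinct functions for distinct ordered pairs, which is precisely where the nontriviality of $F$ enters.
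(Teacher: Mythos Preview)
Your proof is correct and follows essentially the same route as the paper, which treats the corollary as an immediate consequence of the preceding lemma together with the observation that there are exactly $2\binom{n}{2}$ transformations $g_{a,b}$. Your argument is in fact slightly more careful than the paper's, since you explicitly handle the degenerate cases $F=\emptyset$ and $F=\llbracket n\rrbracket$ (where the $g_{a,b}$ fail to be pairwise distinct) and invoke Theorem~\ref{theorem-monster} to justify that it suffices to bound $\#_{\mathrm{Min}}(\mathfrak{SRoot}\,\mathrm{M}_{n,F})$ over all $F$.
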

 Notice that the state complexity is lower than the bound given by Maslov \cite{Mas70}.\\
 
 \begin{lemma}\label{lm-g}
Let $F=\{n-1\}$, and   $P=\{(g,g')\mid g\neq g'\mbox{ and }\forall a,b\in\llbracket n\rrbracket, (g,g')\neq (g_{a,b},g_{b,a}) \}$. 
For any pair of distinct states $(g,g')\in P$, $g$ and $g'$ are separable  in ${\mathfrak{SRoot}}(\mathrm M_{n,F})$. 
\end{lemma}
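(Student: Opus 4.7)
The plan is to exhibit, for each pair $(g,g')\in P$, a single letter $h\in\llbracket n\rrbracket^{\llbracket n\rrbracket}$ that separates them in $\mathfrak{SRoot}(\mathrm{M}_{n,F})$, i.e., such that exactly one of $(h\circ g)^{2}(0)$ and $(h\circ g')^{2}(0)$ equals $n-1$. Since the monster's alphabet contains every function $\llbracket n\rrbracket\to\llbracket n\rrbracket$, finding such an $h$ will suffice. Writing $u=g(0)$ and $u'=g'(0)$, the finality test after reading $h$ reduces to $h(g(h(u)))=n-1$ for $g$ and $h(g'(h(u')))=n-1$ for $g'$, so we only need to control at most the four values $h(u),h(u'),h(g(h(u))),h(g'(h(u')))$. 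The argument will split on whether $u=u'$.

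If $u=u'$, I will pick $t\in\llbracket n\rrbracket$ with $g(t)\neq g'(t)$ (which exists since $g\neq g'$), set $h(u)=t$, and then assign $h$ on the two distinct points $g(t),g'(t)$ so that exactly one image equals $n-1$. The only obstruction is when $u\in\{g(t),g'(t)\}$, which forces the corresponding image to equal $t$; the other image nevertheless remains free, and setting it to $n-1$ when $t\neq n-1$, or to $0$ when $t=n-1$, yields separation. Observe also that $(g_{a,b},g_{b,a})$ never satisfies $g(0)=g'(0)$ when $a\neq b$ (since $0\neq n-1$ for $n>2$), so this case poses no interference with the exclusion defining $P$.

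The case $u\neq u'$ is the heart of the proof, since this is where the exceptions $(g_{a,b},g_{b,a})$ live. Here $h(u)$ and $h(u')$ are independent parameters, so I will first try to choose $y=h(u)$ and $y'=h(u')$ with $g(y)\neq g'(y')$ and $\{g(y),g'(y')\}\cap\{u,u'\}=\emptyset$; the values $h(g(y))$ and $h(g'(y'))$ are then free at distinct positions outside $\{u,u'\}$, and assigning one to $n-1$ and the other to $0$ immediately separates $g$ from $g'$. If no such $(y,y')$ exists, the failure means that for every $(y,y')$ at least one of $g(y)=g'(y')$, $g(y)\in\{u,u'\}$, or $g'(y')\in\{u,u'\}$ must hold. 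Combining this combinatorial constraint with auxiliary probing letters of the shape: send $u$ (or $u'$) to $n-1$ and all other points to $0$, I will successively pin down $g(n-1)=u'$, $g'(n-1)=u$, $g(x)=u$ for every $x\neq n-1$, and symmetrically $g'(x)=u'$ for every $x\neq n-1$, which is exactly $(g,g')=(g_{u',u},g_{u,u'})$, contradicting $(g,g')\in P$.

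The main obstacle is this rigidity step in the second case: one has to show that the failure of the simple independent-assignment recipe forces the excluded symmetric form. This requires carefully interleaving the combinatorial constraint on $(y,y')$ with well-chosen probing letters to deduce the value of $g$ (and of $g'$) at each position $0,n-1,u,u'$, and to check that no residual configuration survives outside the $(g_{a,b},g_{b,a})$ template already shown inseparable in the preceding lemma.
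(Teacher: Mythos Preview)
Your case $u=u'$ is correct and matches the paper's first case, with the obstruction $u\in\{g(t),g'(t)\}$ handled more explicitly than the paper does.

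Your case $u\neq u'$ has a genuine gap in the rigidity step. The failure of the $(y,y')$ search is too weak a constraint: if $\mathrm{Im}(g')\subseteq\{u,u'\}$ then the disjunct ``$g'(y')\in\{u,u'\}$'' holds for every $y'$, and the constraint says nothing about $g$. Concretely, take $n=4$, $F=\{3\}$, $u=0$, $u'=1$, $g=[0,2,3,1]$ and $g'=[1,0,1,0]$. Then $\mathrm{Im}(g')=\{0,1\}$, so no admissible $(y,y')$ exists; moreover $g(3)=1\neq u$ and $g'(3)=0\neq u'$, so neither of your two probing letters separates. Yet $g$ is not of the form $g_{a,b}$ (since $g(0)\neq g(1)$), hence $(g,g')\in P$, and the pair \emph{is} separable (e.g.\ by $h=[1,3,3,0]$). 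Your pinning-down argument, as stated, would wrongly force $(g,g')=(g_{1,0},g_{0,1})$ here. The paper avoids this by splitting instead on $\#(\mathrm{Im}(g)\cup\mathrm{Im}(g'))$. When this exceeds $2$, one picks $x\in\mathrm{Im}(g)\setminus\{u,u'\}$ (after possibly swapping $g,g'$) and sets $h(u)=y$ with $g(y)=x$, $h(u')=0$, $h(x)=n-1$, which separates directly. When it equals $2$, one has the strong constraint $\mathrm{Im}(g),\mathrm{Im}(g')\subseteq\{u,u'\}$, and then a rigidity argument of the kind you sketch actually goes through cleanly. Your plan can be repaired by adopting this sharper case split; the two probing letters alone are not enough.
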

\begin{proof}
Three cases have to be considered:
\begin{itemize}
\item \textbf{Suppose  that $\mathbf{g(0)=g'(0)}$.}\\ Then there exists $x\in\llbracket n\rrbracket\setminus\{0\}$ such that $g(x)\neq g'(x)$. We set $h(g(0))=x$. Hence, $h(g(h(g(0))=h(g(x))$ and $h(g'(h(g'(0))=h(g'(x))$.  But, as $g(x)\neq g'(x)$, it is always possible to choose $h$ such that $h(g(x))=n-1$ while $h(g' (x))\neq n-1$. Thus $h\circ g$ is a final state while $h\circ g'$ is not.
\item \textbf{Suppose that $\mathbf{g(0)\neq g'(0)}$ and that $\mathbf{\#(Im(g)\cup Im(g'))>2}$.}\\ Without loss of generality, one assumes that there exists $x\in \mathrm{Im}(g)$ such that $x\not\in\{g(0),g'(0)\}$. So the values $h(g(0))$, $h(g'(0))$ and $h(x)$ can be chosen independently each from the others. We set $h(g(0))=y$ with $g(y)=x$, $h(g'(0))=0$ and $h(x)=n-1$. We check that $h\circ g$ is a final state while $h\circ g'$ is not final.
\item \textbf{Suppose that $\mathbf{g(0)\neq g'(0)}$ and that $\mathbf{\#(Im(g)\cup Im(g'))=2}$.} \\
If we suppose that for any non final state $x$, we have $g(x)\neq g(n-1)$ and $g'(x)\neq g'(n-1)$. Since $x$ is not final and $\#(Im(g)\cup Im(g'))=2$, we have $g(x)=g(0)$ and $g'(x)=g'(0)$ (recall that $0$ is not final). So, as $g(0)\neq g'(0)$, this implies $g(n-1)\neq g'(n-1)$. 
In other words, $g=g_{a,b}$ and $g'=g_{b,a}$ for some $a,b$. By contraposition, if $(g,g')\neq (g_{a,b},g_{b,a})$ for any $a,b$ then there exists $x\neq n-1$ such that $g(x)=g(n-1)$ or $g'(x)=g'(n-1)$. 
Let us denote by $m$ the minimal element of $\llbracket n \rrbracket$ having this property and without loss of generality assume that $g(m)=g(n-1)$ (in particular, it means that for any $p<m$, $g'(p)\neq g'(n-1)$).
We have  two cases to consider. If $m=0$ then we set $h(g(0))=n-1$ and $h(g'(0))=0$. Obviously, $h(g'(h(g'(0))))=0$. On the other hand, $h(g(h(g(0)))=h(g(n-1))=h(g(0))=n-1$. Hence, $h\circ g$ is final while $h\circ g'$ is not final. If $m>0$ then we have $g(m)=g'(0)$ (because there are exactly two values in the image of $g$ and $g'$). Furthermore, 
 $g'(n-1)\neq g'(0)$ and so $g'(n-1)=g(0)$.  We set $h(g(0))=m$ and $h(g'(0))=n-1$. We have $h(g(h(g(0))))=h(g(m))=h(g'(0))=n-1$. On the other hand, $h(g'(h(g'(0))))=h(g'(n-1))=h(g(0))=m\neq n-1$. It follows that $h\circ g$ is final while $h\circ g'$ is not final.
\end{itemize}
\end{proof}

The following theorem follows directly from  Corollary \ref{cor-sqrt} and Lemma \ref{lm-g}. 
\begin{theorem}
	$\mathrm{sc}_{\sqrt\ }(n)=n^{n}-\binom n2.$
\end{theorem}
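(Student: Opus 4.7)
The plan is to combine the upper bound from Corollary \ref{cor-sqrt} with a matching lower bound obtained via the monster machinery. By Theorem \ref{theorem-monster}, for any choice of $F\subset\llbracket n\rrbracket$ we have $\mathrm{sc}_{\sqrt\ }(n)\geq \#_{\mathrm{Min}}(\mathfrak{SRoot}(\mathrm{M}_{n,F}))$, so it suffices to exhibit an $F$ for which the right-hand side equals $n^{n}-\binom{n}{2}$. Given Lemma \ref{lm-g}, the natural choice is $F=\{n-1\}$.

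First I would record the accessibility count: every function $g\in\llbracket n\rrbracket^{\llbracket n\rrbracket}$ is a state of $\mathfrak{SRoot}(\mathrm{M}_{n,F})$, and each one is accessible from the initial state $\mathrm{Id}$ by reading the single letter $g$ (since $\delta^{g}(\mathrm{Id})=g\circ\mathrm{Id}=g$). This gives exactly $n^{n}$ accessible states before minimization.

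Next I would tally the Myhill--Nerode classes. The lemma immediately preceding Corollary \ref{cor-sqrt} establishes that for each unordered pair $\{a,b\}\subset\llbracket n\rrbracket$ with $a\neq b$, the two states $g_{a,b}$ and $g_{b,a}$ are inseparable, which collapses them into a single equivalence class; there are exactly $\binom{n}{2}$ such pairs and they involve $2\binom{n}{2}$ pairwise distinct states (the $g_{a,b}$ are determined by the ordered pair $(a,b)$). Lemma \ref{lm-g} asserts that every other pair of distinct states is separable, so no further identifications occur. Consequently the minimal automaton has $n^{n}-\binom{n}{2}$ states, which yields $\mathrm{sc}_{\sqrt\ }(n)\geq n^{n}-\binom{n}{2}$ via Theorem \ref{theorem-monster}.

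Combining this lower bound with Corollary \ref{cor-sqrt} gives the claimed equality. The main (minor) obstacle is a bookkeeping check: one must verify that the inseparability relation really partitions the accessible states into exactly $n^{n}-\binom{n}{2}$ classes, i.e.\ that the $\binom{n}{2}$ two-element blocks $\{g_{a,b},g_{b,a}\}$ are pairwise disjoint (clear from $g_{a,b}(0)=a$, which distinguishes the blocks) and that no additional merging can be forced by transitivity through the separable pairs in $P$ from Lemma \ref{lm-g}. Once this is in place, the theorem follows at once.
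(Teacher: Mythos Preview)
Your proposal is correct and follows essentially the same route as the paper, which simply states that the theorem follows directly from Corollary~\ref{cor-sqrt} and Lemma~\ref{lm-g}; you have merely unpacked the counting that those two results entail. One tiny slip: with $F=\{n-1\}$ and $n>1$ one has $g_{a,b}(0)=b$ rather than $a$, but this does not affect your disjointness argument since the ordered pair $(a,b)$ is still recoverable from $(g_{a,b}(n-1),g_{a,b}(0))$.
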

 In order to show some restriction for alphabets of size $<3$, some lemmas will be given. We assume that $n>2$.
 	\begin{lemma}\label{lsubmonT}
		Any submonoid   of  $\IntEnt{n}^{\IntEnt{n}}$ generated by two distinct elements is a proper submonoid of $\IntEnt{n}^{\IntEnt{n}}$. 
	\end{lemma}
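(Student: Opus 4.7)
The plan is to prove this by a case analysis on the nature of the two distinct generators $g_1,g_2 \in \IntEnt{n}^{\IntEnt{n}}$, distinguishing whether each is a permutation or a singular (non-bijective) transformation. The basic tool I will use throughout is the elementary inequality $|\mathrm{Im}(fg)| \leq \min(|\mathrm{Im}(f)|,|\mathrm{Im}(g)|)$, valid in either composition convention, which implies that any product involving at least one singular factor is itself singular.

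First, if both $g_1$ and $g_2$ are permutations, then the generated submonoid is contained in the symmetric group $S_n$, and since $|S_n| = n! < n^n$ for $n \geq 2$, this is a strict submonoid of $\IntEnt{n}^{\IntEnt{n}}$. Second, if both generators are singular, then every non-empty product of generators is singular by the image-size observation; consequently the submonoid consists only of $\mathds{1}$ together with singular transformations, and hence misses every non-identity permutation (of which there are many for $n\geq 2$).

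The interesting case---and the one that really needs the hypothesis $n > 2$---is the mixed case, where one generator $\sigma$ is a permutation and the other $t$ is singular. Any product in $\langle \sigma, t\rangle$ that contains $t$ as a factor is singular, so the permutations in $\langle \sigma, t\rangle$ are exactly the powers of $\sigma$, which form the cyclic subgroup $\langle\sigma\rangle$ of $S_n$. For $n > 2$, however, $S_n$ is non-abelian and therefore not cyclic, so $\langle\sigma\rangle \subsetneq S_n$ and there exists at least one permutation not belonging to $\langle g_1, g_2\rangle$.

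The principal obstacle is this last case, since the first two follow from a trivial counting or image-size argument; the mixed case is the only one where the assumption $n > 2$ is genuinely used, via the non-commutativity of $S_n$. Once this structural fact is invoked, the conclusion that $\langle g_1, g_2\rangle$ is a proper submonoid of $\IntEnt{n}^{\IntEnt{n}}$ follows immediately in all three cases.
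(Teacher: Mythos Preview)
Your proof is correct and rests on the same two ingredients as the paper's: the fact that the singular transformations form a two-sided ideal of $\IntEnt{n}^{\IntEnt{n}}$ (your image-size inequality), and the fact that $S_n$ is not cyclic for $n>2$ (equivalently, cannot be generated by a single permutation). The paper packages these into a short contradiction argument---if $\langle f,g\rangle=\IntEnt{n}^{\IntEnt{n}}$ then the ideal property forces both $f$ and $g$ to be permutations, but then $\langle f,g\rangle\subset S_n$, absurd---whereas you unfold the same reasoning into an explicit three-way case analysis; the mathematical content is identical, only the organization differs.
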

	\begin{proof} Suppose that $\llbracket n\rrbracket^{\llbracket n\rrbracket}$ is generated by two elements $f$ and $g$.
		Recall first that we need at least two permutations for generating the symmetric group $\mathfrak S_{n}$ (see \cite{GM08}).  We notice also that
		$\llbracket n\rrbracket^{\llbracket n\rrbracket}\setminus \mathfrak S_{n}$  is an ideal of  $\llbracket n\rrbracket^{\llbracket n\rrbracket}$, that is if $t\in \llbracket n\rrbracket^{\llbracket n\rrbracket}\setminus \mathfrak S_{n}$  and $t'\in \llbracket n\rrbracket^{\llbracket n\rrbracket}$ then $t\circ t', t'\circ t\in \llbracket n\rrbracket^{\llbracket n\rrbracket}\setminus \mathfrak S_{n}$. This shows that $f, g\in\mathfrak S_{n}$. But since $\mathfrak S_{n}$ is a submonoid, it is stable by composition. It follows that $\mathfrak S_{n}=\llbracket n\rrbracket^{\llbracket n\rrbracket}$. Since this is absurd we deduce the result.
	\end{proof}
	\begin{lemma}\label{lgenT}
	The monoid $\IntEnt{n}^{\IntEnt{n}}$ is generated by the two permutations $(0,1)$ and $(0,1,\dots, n-1)$ together with any of the contractions $\left(i\atop j\right)$. 
	\end{lemma}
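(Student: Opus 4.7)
The strategy has three stages. \emph{First}, I recall that $(0,1)$ and the $n$-cycle $\gamma=(0,1,\ldots,n-1)$ generate $\mathfrak S_n$, via the standard identity $\gamma^{k}(0,1)\gamma^{-k}=(k,k+1\bmod n)$ and the classical fact (see \cite{GM08}) that the adjacent transpositions generate the symmetric group. \emph{Second}, starting from the given contraction $c_{0}=\left(i_0\atop j_0\right)$, a direct computation shows $\sigma c_{0}\sigma^{-1}=\left(\sigma(i_0)\atop \sigma(j_0)\right)$ for any $\sigma\in\mathfrak S_n$; since $\mathfrak S_n$ acts transitively on ordered pairs of distinct points of $\IntEnt{n}$, this produces every contraction $\left(i\atop j\right)$ with $i\neq j$.

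\emph{Third}, the core of the argument, is an induction on the defect $d(f)=n-\#\{f(x)\mid x\in\IntEnt{n}\}$ showing that every $f\in\IntEnt{n}^{\IntEnt{n}}$ lies in the submonoid generated by $\mathfrak S_n$ together with all contractions. The base case $d(f)=0$ means $f\in\mathfrak S_n$ and is handled by stage one. For the inductive step, when $d(f)\geq 1$ there exists $y\in\IntEnt{n}$ outside $\mathrm{Im}(f)$, and by pigeonhole there exist $i\neq j$ with $f(i)=f(j)=:z$. Define $f''$ by $f''(i)=y$ and $f''(x)=f(x)$ for $x\neq i$; then $\mathrm{Im}(f'')=\mathrm{Im}(f)\cup\{y\}$, so $d(f'')=d(f)-1$. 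I then verify directly the decomposition $f=\left(y\atop z\right)\circ f''$: at $x=i$ the right-hand side maps $i\mapsto y\mapsto z=f(i)$, while at $x\neq i$ it maps $x\mapsto f(x)\mapsto f(x)$, the contraction acting as the identity on $f(x)$ because $f(x)\neq y$. By the induction hypothesis $f''$ is in the generated submonoid, and by stage two so is the contraction $\left(y\atop z\right)$, hence so is $f$.

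The main obstacle is this decomposition step: one must choose $y$ outside the image and $z$ as a shared value, then check that postcomposing the modified map $f''$ with exactly the contraction $\left(y\atop z\right)$ recovers $f$. Existence of the auxiliary data is guaranteed precisely by the hypothesis $d(f)\geq 1$ (non-surjective is equivalent to non-injective on a finite set), so the induction terminates; the remaining stages are standard specializations of well-known facts about $\mathfrak S_n$ and conjugation of contractions.
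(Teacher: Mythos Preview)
Your proof is correct. The paper's own argument is shorter but less self-contained: it invokes \cite{GM08} for the fact that $(0,1)$, $(0,1,\dots,n-1)$ and the particular contraction $\left(0\atop 1\right)$ already generate $\IntEnt{n}^{\IntEnt{n}}$, and then reduces the lemma to that case by a single conjugation, namely $\sigma^{-1}\circ\left(i\atop j\right)\circ\sigma=\left(0\atop 1\right)$ for any permutation $\sigma$ with $\sigma(0)=i$, $\sigma(1)=j$.

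Your route differs in stage three: instead of citing the special case, you conjugate the given contraction to obtain \emph{all} contractions and then prove directly, by induction on the defect $d(f)=n-\#\mathrm{Im}(f)$, that $\mathfrak S_n$ together with the contractions generates $\IntEnt{n}^{\IntEnt{n}}$. This is a genuine, elementary proof of the cited fact from \cite{GM08}, so your argument is more self-contained at the cost of a little extra work. The key decomposition $f=\left(y\atop z\right)\circ f''$ is verified correctly (in particular, $f(x)\neq y$ for $x\neq i$ because $y\notin\mathrm{Im}(f)$), and the existence of $y$ and of the colliding pair $i,j$ is exactly what $d(f)\geq 1$ guarantees on a finite set. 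Both approaches share the same two preliminary stages (generating $\mathfrak S_n$ and moving contractions around by conjugation); they diverge only in whether the full-transformation-monoid generation is quoted or reproved.
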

	\begin{proof}
		It is known (see \cite{GM08}) that $\IntEnt{n}^{\IntEnt{n}}$ is generated by $(0,1)$, $(0,1,\dots, n-1)$ and $\left(0\atop 1\right)$ and that any permutation is generated by $(0,1)$ and $(0,1,\dots, n-1)$. Thus, let $\sigma$ be a permutation sending $0$ to $i$ and $1$ to $j$. The result is just a consequence of the equality $\sigma^{-1}\circ \left(i\atop j\right)\circ \sigma=\left(0\atop 1\right)$.
	\end{proof}

	\begin{lemma}\label{lm-limite}
		There exists $i,j\in\llbracket n\rrbracket$ such that $i\neq j$ and  $\left(i\atop j\right)\not\in \{g_{p,q}\mid p,q\in\llbracket n\rrbracket, p\neq q\}$.
	\end{lemma}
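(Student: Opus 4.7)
The plan is to exhibit an explicit pair $(i,j)$ that works; no quantitative counting is needed. The natural candidate is $i=n-1$ and $j=0$. Note that this is a valid contraction because the standing hypothesis $n>2$ guarantees $i\neq j$.

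The key structural observation I would use is that, since $F=\{n-1\}$, every function $g_{p,q}$ with $p\neq q$ is \emph{constant} on the complement $\llbracket n\rrbracket\setminus F=\{0,1,\ldots,n-2\}$, taking the value $q$ there. Under the assumption $n>2$, this complement contains the two distinct elements $0$ and $1$, so in particular
\[
g_{p,q}(0)\;=\;q\;=\;g_{p,q}(1)\qquad\text{for every admissible }p,q.
\]
In other words, every $g_{p,q}$ identifies the inputs $0$ and $1$.

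On the other hand, by the definition of a contraction, $\left(n-1\atop 0\right)$ fixes every element of $\llbracket n\rrbracket\setminus\{n-1\}$; in particular it sends $0\mapsto 0$ and $1\mapsto 1$. Hence its images on $0$ and $1$ are distinct, contradicting the constancy of any $g_{p,q}$ on $\{0,1\}$. I would conclude that $\left(n-1\atop 0\right)\notin\{g_{p,q}\mid p,q\in\llbracket n\rrbracket,\ p\neq q\}$, which proves the lemma.

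There is essentially no obstacle here: the proof is a one-line separation argument comparing values at the two inputs $0$ and $1$. The only place the hypothesis $n>2$ is needed is to ensure that $\{0,1\}$ actually lies inside $\llbracket n\rrbracket\setminus F$, which is precisely what makes the constancy of the $g_{p,q}$'s visible.
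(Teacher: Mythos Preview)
Your argument is correct for the special case $F=\{n-1\}$, but that is not the setting of this lemma. Although Lemma~\ref{lm-g} fixes $F=\{n-1\}$, Lemma~\ref{lm-limite} is stated (and proved in the paper) for an arbitrary set $F$ of final states, and it is applied that way in Proposition~\ref{2-letters}, where $F$ is the set of final states of an arbitrary minimal CDFA over a two-letter alphabet. Your proof relies on the fact that $\{0,1\}\subset\llbracket n\rrbracket\setminus F$, which is exactly the assumption $F=\{n-1\}$; for other $F$ (e.g.\ $n=3$, $F=\{1\}$) your specific choice $\left(n-1\atop 0\right)$ can in fact equal some $g_{p,q}$.

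The fix is straightforward and is essentially what the paper does: since $n>2$, either $\#F>1$ or $n-\#F>1$. In either case one of $F$, $\llbracket n\rrbracket\setminus F$ contains two distinct elements $a,b$, and every $g_{p,q}$ takes the same value on $a$ and $b$. Choosing any $i\notin\{a,b\}$ (possible since $n\geq 3$) and any $j\neq i$, the contraction $\left(i\atop j\right)$ fixes both $a$ and $b$, hence separates them, and so cannot be any $g_{p,q}$. Your core idea---compare values on two inputs on which all $g_{p,q}$ are forced to agree---is the right one; you just need to select those two inputs according to whichever of $F$ or its complement has size at least $2$, rather than hard-coding $\{0,1\}$.
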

	\begin{proof}
		Since $n>2$, either $\#F>1$ or $n-\#F>1$. Assume without loss of generality $\#F>1$. Let $i,i'\in F$ and $j\in \IntEnt{n}\setminus F$.  		
		We have  $\left( i\atop j\right)\not
		\in \{g_{p,q}\mid p,q\in\llbracket n\rrbracket, p\neq q\}$ otherwise we must have $g_{p,q}(i)=p=j$ and  $g_{p,q}(i')=p=i'$ which is impossible since $j\neq i'$.
	\end{proof}
	
 \begin{proposition}\label{2-letters}
	 For any regular language $L$ over an alphabet with at most two letters, if $\mathrm{sc}(L)=n>2$ then
	$\mathrm{sc}(\sqrt L)<\mathrm{sc}_{\sqrt\ }(n)$.
 \end{proposition}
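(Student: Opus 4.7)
The plan is to bound $\mathrm{sc}(\sqrt L)$ through the transition monoid of $L$. Let $A=(\Sigma,\llbracket n \rrbracket,0,F,\delta)$ be the minimal CDFA of $L$ and let $M=\{\delta^w : w\in\Sigma^*\}\subseteq \llbracket n \rrbracket^{\llbracket n \rrbracket}$ be its transition monoid. A direct reading of $\mathfrak{SRoot}$ in Table~\ref{table operations} shows that reading $w$ from $\mathrm{Id}$ in $\mathfrak{SRoot}(A)$ yields the state $\delta^w$, so the accessible states of $\mathfrak{SRoot}(A)$ coincide with $M$. Two such states $g,g'$ are Myhill--Nerode equivalent iff $h(g(h(g(0))))\in F \Leftrightarrow h(g'(h(g'(0))))\in F$ for every $h\in M$; denote this relation by $\sim_M$, so that $\mathrm{sc}(\sqrt L)=|M/\!\sim_M|$.

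I would then compare $\sim_M$ with the analogous relation $\sim_T$ obtained by letting $h$ range over the full monoid $T=\llbracket n \rrbracket^{\llbracket n \rrbracket}$. Since $M\subseteq T$, fewer separators are available in $\sim_M$, so $\sim_M$ is coarser than $\sim_T$ restricted to $M$, giving $|M/\!\sim_M|\le|M/\!\sim_T|$. The right-hand side is the number of $\sim_T$-classes that intersect $M$, and the lemmas preceding this proposition identify the $\sim_T$-classes as the $\binom{n}{2}$ pairs $\{g_{a,b},g_{b,a}\}$ (for $a\ne b$) together with singletons of every other element of $T$, for a total of $n^n-\binom{n}{2}$.

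The core of the argument is a proof by contradiction: assume every $\sim_T$-class meets $M$. Then $M$ contains every element whose $\sim_T$-class is a singleton. Three such elements suffice: the transposition $(0,1)$, the $n$-cycle $(0,1,\dots,n-1)$, and the contraction $\left(i\atop j\right)$ supplied by Lemma~\ref{lm-limite}. The first two lie in singleton classes because for $n>2$ their image has cardinality $n>2$, while any $g_{p,q}$ has image of cardinality at most $2$; the third lies in a singleton class by Lemma~\ref{lm-limite} itself. Lemma~\ref{lgenT} then asserts that these three transformations generate $T$, forcing $M=T$. However $M$ is generated by the at most two transition functions $\delta^a$ for $a\in\Sigma$: if they are distinct, Lemma~\ref{lsubmonT} makes $M$ proper, and otherwise $M$ is cyclic with at most $n$ elements, hence proper for $n>2$. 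Either way we contradict $M=T$.

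Consequently some $\sim_T$-class avoids $M$, so $\mathrm{sc}(\sqrt L)\le|M/\!\sim_T|<n^n-\binom{n}{2}=\mathrm{sc}_{\sqrt{\ }}(n)$, which is the desired conclusion. The only delicate step is the direction of the inequality $|M/\!\sim_M|\le|M/\!\sim_T|$: testing indistinguishability with fewer witnesses produces a coarser relation and hence fewer quotient classes. All other steps are direct consequences of Lemmas~\ref{lsubmonT}, \ref{lgenT} and~\ref{lm-limite}.
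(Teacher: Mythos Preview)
Your route is the paper's: reduce to showing that the transition monoid $M$ of $A$ (which coincides with the accessible part of $\mathfrak{SRoot}(A)$) cannot be all of $\llbracket n\rrbracket^{\llbracket n\rrbracket}$, by exhibiting inside it the three generators $(0,1)$, $(0,1,\dots,n-1)$ and a contraction $\left(i\atop j\right)$, and then invoking Lemmas~\ref{lsubmonT}--\ref{lm-limite}. Your explicit use of the two equivalences $\sim_M$ and $\sim_T$ is in fact cleaner than the paper's bare ``Thus we have $t\notin\{g_{p,q}\}$ implies $t\in M$''.

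There is one genuine slip, though. You write that ``the lemmas preceding this proposition identify the $\sim_T$-classes as the $\binom{n}{2}$ pairs $\{g_{a,b},g_{b,a}\}$ together with singletons'', and you then use this to conclude that $(0,1)$, the $n$-cycle and the contraction of Lemma~\ref{lm-limite} sit in singleton $\sim_T$-classes. But Lemma~\ref{lm-g} proves that description of $\sim_T$ only for the particular choice $F=\{n-1\}$, whereas here $F$ is the final set of your arbitrary language $L$. For a general $F$ the unnamed lemma on the pairs $g_{a,b},g_{b,a}$ yields only the inequality $|T/{\sim_T}|\le n^n-\binom{n}{2}$; nothing rules out that, say, $(0,1)$ is $\sim_T$-equivalent to some other transformation, in which case ``its class meets $M$'' no longer forces $(0,1)\in M$.

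The repair is short: take as contradiction hypothesis $|M/{\sim_M}|=n^n-\binom{n}{2}$ rather than ``every $\sim_T$-class meets $M$''. Then the chain
\[
|M/{\sim_M}|\;\le\;|M/{\sim_T}|\;\le\;|T/{\sim_T}|\;\le\;n^n-\binom{n}{2}
\]
collapses to equalities. The equality $|T/{\sim_T}|=n^n-\binom{n}{2}$ now \emph{forces} the $\sim_T$-classes to be exactly the $\binom{n}{2}$ pairs together with singletons (no further merging is possible), and the equality $|M/{\sim_T}|=|T/{\sim_T}|$ says every class meets $M$. From that point your argument goes through verbatim. Equivalently, dispose of the case $|T/{\sim_T}|<n^n-\binom{n}{2}$ first (it is trivial) and then run your argument under the remaining hypothesis $|T/{\sim_T}|=n^n-\binom{n}{2}$.
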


\begin{proof}
Let $L$ be a language with $\mathrm{sc}(L)=n>2$ and $A=(\{\mathtt a,\mathtt b\},\llbracket n\rrbracket, \{0\},F,\cdot)$ 
be a minimal CDFA recognizing $L$. Since $\mathrm{sc}(L)=n>2$, the set of final states $F$ is a proper subset of $\llbracket n\rrbracket$, otherwise $L=\Sigma^*$ and $\mathrm{sc}(L)=1$. 
	Since the application $\phi \rightarrow \mathfrak d(0,F,\phi)$ is a morphism of semigroups, the set of the states of $\mathfrak{SRoot}(A)$ is a submonoid $M$ of $\llbracket n\rrbracket^{\llbracket n\rrbracket}$ generated by two elements. Now, we just have to prove that any submonoid of $\IntEnt{n}^{\IntEnt{n}}$ generated by two elements has strictly less than $n^n - \binom{n}{2}$ elements.  
	 Suppose that 
 $\#_{\mathrm{Min}}(\mathfrak{SRoot}(A))=\mathrm{sc}_{\sqrt{\ } }(n)$. Thus we have $t\not\in\{g_{p,q}\mid p,q\in\llbracket n\rrbracket, p\neq q\}$ implies $t\in M$.
 	  Obviously, we have
	$(0,1), (0,1,\dots, n-1)\not\in \{g_{p,q}\mid p,q\in\llbracket n\rrbracket, p\neq q\}$ and so $(0,1), (0,1,\dots, n-1)\in M$. Furthermore, we have:
	From Lemma \ref{lm-limite} there exists $i,j\in\llbracket n\rrbracket$ such that $i\neq j$ and $\left(i\atop j\right)\in M$. So by Lemma \ref{lgenT}, $M=\llbracket n\rrbracket^{\IntEnt{n}}$. But, by Lemma \ref{lsubmonT}, as $M$ has only two generators, it is a proper submonoid of $\llbracket n\rrbracket^{\IntEnt{n}}$ which contradicts the previous sentence.
	So there exists a transformation $t\not\in \{g_{p,q}\mid p,q\in\llbracket n\rrbracket, p\neq q\}$ such that $t\not\in M$ and thus 
	$\#_{\mathrm{Min}}(\mathfrak{SRoot}(A))<n^n-\binom{n}{2}=\mathrm{sc}_{\sqrt{\ } }(n)$.
	\end{proof}
	
	Let us notice that
	Krawetz \textit{et al.} \cite{KLS05} found a very similar result 
not quite for square root, but for the closely related operation  $\mathrm{Root}(L)=\{w\mid \exists n$ such that $w^n\in L\}$.

	\section{Conclusion}
	New tools for computing state complexity are provided. As there is a witness among monster automata, one can focus on them to obtain a tight bound for state complexity. One of our future works is to use these tools on operations where the bound is not tight or not known as cyclic shift or  star of xor.	
	As these tools produce very large size alphabet, it remains to study how it is possible to improve this size by obtaining in some cases a constant size alphabet. 
	
	The authors learned that Sylvie Davies has independently and in the same time obtained  some of the results presented in this paper. In particular, she described our monster approach as the OLPA (One letter Per Action) one. Her work can be found in \cite{Dav18}.

 \bibliography{../COMMONTOOLS/biblio,../COMMONTOOLS/bibjg}




\end{document}